\DeclareMathOperator{\troot}{root}
\DeclareMathOperator{\cost}{c}
\DeclareMathOperator{\pathstart}{start}
\DeclareMathOperator{\pathend}{end}
\DeclareMathOperator{\tpath}{path}
\DeclareMathOperator{\mappingedit}{edits}
\DeclareMathOperator{\mappingrelabel}{rel}
\DeclareMathOperator{\mappinginsert}{ins}
\DeclareMathOperator{\mappingdelete}{del}
\newcommand{\odedist}{\delta_{\text{1}}}
\newcommand{\edist}{\delta_{\text{E}}}
\newtheorem{theorem}{Theorem}
\newtheorem{lemma}{Lemma}
\newtheorem{definition}{Definition}
\title{A Deformation-based Edit Distance for Merge Trees}
\author{Florian Wetzels\thanks{e-mail: f\_wetzels13@cs.uni-kl.de}\\ %
        \scriptsize TU Kaiserslautern %
\and Christoph Garth\thanks{e-mail: garth@cs.uni-kl.de}\\ %
     \scriptsize TU Kaiserslautern }
 \definecolor{lightred}{rgb}{1,0.6,0.6}
\abstract{In scientific visualization, scalar fields are often compared through edit distances between their merge trees. Typical tasks include ensemble analysis, feature tracking and symmetry or periodicity detection. Tree edit distances represent how one tree can be transformed into another through a sequence of simple edit operations: relabeling, insertion and deletion of nodes. In this paper, we present a new set of edit operations working directly on the merge tree as an geometrical or topological object: the represented operations are deformation retractions and inverse transformations on merge trees, which stands in contrast to other methods working on branch decomposition trees. We present a quartic time algorithm for the new edit distance, which is branch decomposition-independent and a metric on the set of all merge trees.%
} 
\keywords{Scalar data, Topological data analysis, Merge trees, Edit distance}
\begin{document}


\firstsection{Introduction}
\label{section:introduction}

\maketitle

Measuring similarity or dissimilarity of scalar fields, as well as finding similar features or mappings between them, is an important tool in scientific visualization for the analysis of ensemble data or time series~\cite{DBLP:journals/cgf/YanMSRNHW21},
specifically for tasks such as feature tracking, clustering, and the detection of periodicity or self-similarity. Both problems can be and have been addressed through the use of edit distances on merge trees~\cite{DBLP:journals/tvcg/PontVDT22,DBLP:journals/tvcg/SridharamurthyM20,DBLP:journals/cgf/LohfinkWLWG20,wetzels2022branch}, an abstract representation of the topology of sub-level sets or super-level sets of scalar functions. Tree edit distances, which come in a large variety of different forms~\cite{treeEditSurvey}, are well-suited for these tasks since, typically (i.e.\ for most variants), they are efficiently to compute, induce mappings between the edges of the trees (which correspond to topological features), fulfill the metric properties, are very intuitive to understand and have great flexibility through the use of different base metrics on the labels of the trees. Furthermore, working on topological abstractions such as merge trees has a huge impact on performance, as these structures usually stay rather small in comparison to the actual data domain.

The operations in classic tree edit distances are node-insertion, node-deletion and node-relabel. For merge trees, this set of edit operations is not coherent with the intuitive way to transform them, since merge trees are actually continuous objects, whereas node-labeled trees are not. 
Formally, this means that applying the classic edit operations to a merge tree may not result in another merge tree, or an invalid one.
Figure~\ref{fig:classicVsContinuousEdit} illustrates this in more detail. Previous approaches~\cite{DBLP:journals/tvcg/PontVDT22,DBLP:journals/tvcg/SridharamurthyM20,DBLP:journals/cgf/SaikiaSW14} overcame this issue by working on branch decomposition trees (BDTs) of merge trees rather than working on the merge tree itself. However, this comes with the downside of using fixed branch decompositions, which are very susceptible to small-scale perturbations in the data. Although this problem has also been overcome recently~\cite{wetzels2022branch} through the new concept of branch mappings, the solution came at the cost of losing the metric property.

Furthermore, both approaches (using fixed BDTs or branch mappings) lose the direct connection to modifying operations on merge trees. While some connection is still there (e.g.\ all operations on BDTs can be interpreted as non-primitive operations on merge trees), which will be discussed in Section~\ref{section:discussion}, the corresponding operations differ significantly from the classic model on arbitrary trees, and are not as intuitive. In particular, the branch based edit distances focus more on the induced mappings than the actual edit operations.

This paper introduces an edit distance for merge trees based on a new set of edit operations which are specifically tailored to deformations of merge trees, with a cost function based on the typical drawings or embeddings of merge trees, making it highly intuitive.

\begin{figure}[]
    \centering
    \resizebox{\linewidth}{!}{
    \begin{tikzpicture}[yscale=0.5]
    
    \node[draw,circle,fill=gray!100,minimum width=0.7cm] at (0, 0) (root_1) {};
    \node[draw,circle,fill=gray!100,minimum width=0.7cm] at (0, 3) (s1_1) {};
    \node[draw,circle,fill=red!80,minimum width=0.7cm] at (-2, 12) (m1_1) {};
    \node[draw,circle,fill=red!80,minimum width=0.7cm] at (2, 12) (m2_1) {};
    \node[draw,circle,fill=gray!100,minimum width=0.7cm] at (1.333, 9) (s2_1) {};
    \node[draw,circle,fill=red!80,minimum width=0.7cm] at (0.3, 11) (m3_1) {};
    \draw[gray,very thick] (root_1) -- (s1_1);
    \draw[gray,very thick] (s1_1) -- (m1_1);
    \draw[gray,very thick] (s1_1) -- (s2_1);
    \draw[gray,very thick] (s2_1) -- (m2_1);
    \draw[gray,very thick] (s2_1) -- (m3_1);
    
    \node[draw,circle,fill=gray!100,minimum width=0.7cm] at (0+8, 0) (root_2) {};
    \node[draw,circle,fill=gray!100,minimum width=0.7cm] at (0+8, 3) (s1_2) {};
    \node[draw,circle,fill=red!80,minimum width=0.7cm] at (-2+8, 12) (m1_2) {};
    \node[draw,circle,fill=red!80,minimum width=0.7cm] at (2+8, 12) (m2_2) {};
    \draw[gray,very thick] (root_2) -- (s1_2);
    \draw[gray,very thick] (s1_2) -- (m1_2);
    \draw[gray,very thick] (s1_2) -- (m2_2);
    
    \node[draw,circle,fill=gray!100,minimum width=0.7cm] at (0-8, 0) (root_3) {};
    \node[draw,circle,fill=gray!100,minimum width=0.7cm] at (0-8, 3) (s1_3) {};
    \node[draw,circle,fill=red!80,minimum width=0.7cm] at (-2-8, 12) (m1_3) {};
    \node[draw,circle,fill=red!80,minimum width=0.7cm] at (2-8, 12) (m2_3) {};
    \node[draw,circle,fill=gray!100,minimum width=0.7cm] at (1.333-8, 9) (s2_3) {};
    \draw[gray,very thick] (root_3) -- (s1_3);
    \draw[gray,very thick] (s1_3) -- (m1_3);
    \draw[gray,very thick] (s1_3) -- (s2_3);
    \draw[gray,very thick] (s2_3) -- (m2_3);
    
    \draw[->,ultra thick] (-1.5,6) to[bend left=20] (-6.5,6) node [midway,above] {};
    \draw[->,ultra thick] (1.5,6) to[bend right=20] (6.5,6) node [midway,above] {};
    \node[] at (-4, 6.7) (label_1) {\huge classic delete};
    \node[] at (4, 6.7) (label_2) {\huge continuous delete};
    
    \end{tikzpicture}
    }
    \caption{Illustration of the difference between the classic discrete edit operations and the new continuous edit operations, exemplary for a deletion. If we apply a classic node-deletion or edge-contraction to the middle tree, we get the tree on the left. The result is not a merge tree. The desired result is shown in the right tree. Note that the problem cannot be fixed through deleting the remaining node with another classic edit operation, because a deletion always removes an edge as well.}
    \label{fig:classicVsContinuousEdit}
\end{figure}
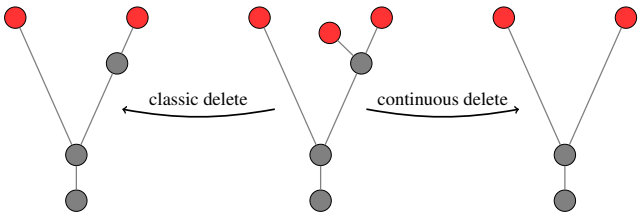

\subsection*{Contribution}

In particular, we present an edit distance between merge trees with its corresponding mappings, which we call \emph{path mappings}, that
\begin{itemize}
    \item corresponds to an optimal sequence of deformations,
    \item is independent of a fixed branch decomposition,
    \item fulfills the metric properties,
    \item shows better practical performance than branch mappings.
\end{itemize}
Furthermore, we provide an open-source implementation of our distance and showcase its utility by replicating previous results for clustering and periodicity detection.

\smallskip
In the remainder of this section, we cover related work. Section~\ref{section:preliminaries} recaps important definitions and concepts. In Section~\ref{section:pathmappings}, we introduce the new edit distance and path mappings, and show their core properties needed for implementation and application. Section~\ref{section:experiments} discusses the actual algorithms and the results of our experiments. In Section~\ref{section:discussion}, we discuss the choice of edit operations and the relation to previous methods. We conclude the paper with an outlook to future work in Section~\ref{section:conclusion}.

\subsection*{Related Work}

Topological descriptors and abstractions play a key role for the analysis of scalar fields in scientific visualization. Many of them have been used for the task of scalar field comparison: merge trees or their generalizations contour trees and reeb graphs, as well as others like extremum graphs or persistence diagrams. A recent survey of this topic can be found in~\cite{DBLP:journals/cgf/YanMSRNHW21}. A survey on topology based visualization methods in general is given by Heine et al.\ in~\cite{DBLP:journals/cgf/HeineLHIFSHG16}. Tree edit distances in general have been introduced by Tai for ordered rooted trees~\cite{DBLP:journals/jacm/Tai79} and adapted for unordered rooted trees by Zhang~\cite{DBLP:journals/ipl/ZhangSS92}. In this paper, we use the one-degree edit distance, which has been introduced by Selkow~\cite{DBLP:journals/ipl/Selkow77} for ordered rooted trees and which is a special variant of the constrained edit distance~\cite{DBLP:journals/algorithmica/Zhang96}. A survey on the various versions of tree edit distances and related problems can be found in~\cite{treeEditSurvey}. We now quickly review those methods from the field of scientific visualization that are closest to the here presented distance measure.

The constrained edit distance on merge trees (actually working on a fixed BDT) has been applied by Sridharamurthy et al.~\cite{DBLP:journals/tvcg/SridharamurthyM20} to various visualization tasks including periodicity detection or clustering. They adapted their method for the use in self-similarity detection in~\cite{DBLP:journals/corr/abs-2111-04382}. Saikia et al.\ addressed the same task through the use of the one-degree edit distance on ordered BDTs~\cite{DBLP:journals/cgf/SaikiaSW14}. Pont et al.\ used the one-degree edit distance on unordered BDTs to define a Wasserstein distance for merge trees and compute geodesics and barycenters on them~\cite{DBLP:journals/tvcg/PontVDT22}. Lohfink et al.\ used tree alignments, another special form of tree edit distances (see~\cite{DBLP:conf/cpm/JiangWZ94,treeEditSurvey}), to compute a supertree of all contour trees in an ensemble~\cite{DBLP:journals/cgf/LohfinkWLWG20,DBLP:conf/visualization/LohfinkGWVG21}. Wetzels et al.\ defined the concept of branch mappings~\cite{wetzels2022branch}, an edit distances that works on BDTs but independent of a fixed branch decomposition by optimizing the considered branch decomposition on the fly. Other distances on merge trees and contour trees that are not specifically edit distances can be found in~\cite{morozov2013interleaving,DBLP:books/daglib/p/BeketayevYMWH14,DBLP:journals/tvcg/ThomasN11,DBLP:journals/tvcg/YanWMGW20}.

Distance measures on other topological graph structures have been applied as well. For the reeb graph, of which contour trees are just a special case, an edit distance was proposed in~\cite{reebgrapheditdistance}. Actually, the here presented edit distance can be seen as an adaptation of the reeb graph edit distance to merge trees (note however that tree edit distances are more than just a special case of graph edit distances, as the allowed operations differ significantly and the graph edit distance has a much higher complexity). Other examples for distances between reeb graphs can be found in~\cite{reebgraphdistance,localEquivalence,categorifiedreebgraphs}. In~\cite{DBLP:conf/apvis/NarayananTN15}, a distance measure for extremum graphs was introduced.

\section{Preliminaries}
\label{section:preliminaries}

In this section, we recap the core definitions from computational topology and graph theory that are needed to define the edit distance for merge trees and to study its properties. For basic notions on topological spaces and simplicial complexes, we refer to~\cite{DBLP:books/daglib/0025666}.

\subsection*{Merge Trees}
\label{section:merge_trees}

Given a $d$-manifold $\mathbb{X}$ with a continuous map $f: \mathbb{X} \rightarrow \mathbb{R}$, its \emph{Join Tree} and \emph{Split Tree} represent the connectivity of its sub-level sets and super-level sets.

The join tree of $\mathbb{X},f$ is the quotient space $\mathbb{X}/{\sim}$ under the equivalence relation $\sim$, where $x \sim y$ if $f(x) = f(y)$ and $x$ and $y$ belong to the same connected component of the sublevel set $f^{-1}((-\infty,f(x)])$. A split tree of $\mathbb{X},f$ is defined in the same way by just replacing  the sublevel set $f^{-1}((-\infty,f(x)])$ with the superlevel set $f^{-1}([f(x),-\infty))$.

We use the terms \emph{critical point}, \emph{maximum}, \emph{minimum}, \emph{saddle} and \emph{path} (in a topological space, not in discrete graphs) following the definitions in~\cite{carr2004topological}.

A merge tree is, in essence, a 1-dimensional simplicial complex. I.e.\ for each merge tree $\mathbb{X}/{\sim},f$ there is a simplicial complex $K$ of dimension 1 with a scalar function $f'$ such that its underlying space $|K|$ is not only homeomorphic to $\mathbb{X}/{\sim}$, but there is a homeomorphism $h : \mathbb{X}/{\sim} \rightarrow |K|$ that preserves the scalar function, $f(x) = f'(h(x))$ for all $x \in \mathbb{X}/{\sim}$, and with $\text{Vert}(K)$ being exactly the critical points of $\mathbb{X},f$. We call this simplicial complex $\mathcal{T}(\mathbb{X},f)$, and we will refer to the underlying spaces $|\mathcal{T}(\mathbb{X},f)|$ of these structures as merge trees. This means that by merge trees we do not only denote quotient spaces of scalar fields but actually all spaces that are homeomorphic to them through a scalar function-preserving homeomorphism.

As a next step, we define an abstract model for merge trees and introduce basic notation for graphs and trees used in this paper. After that, we will discuss the relation of continuous merge trees (as quotient spaces) and abstract merge trees.

\subsection*{Abstract Merge Trees}
\label{section:abstract_merge_trees}

Throughout this paper, we will consider rooted trees as directed graphs with parent edges. I.e.\ a rooted tree $T$ is a directed graph with vertex set $V(T)$, edge set $E(T) \subseteq V(T) \times V(T)$ and a unique root, denoted $\troot(T)$. We call a node $c \in V(T)$ a child of node $p \in V(T)$, if $(c,p) \in E(T)$, and, conversely, $p$ the parent of $c$. For a node $p$, we denote its number of children by $\deg_T(p) \coloneqq |\{ c \mid (c,p) \in E(T) \}|$. Furthermore, we denote the empty tree, consisting only of a single node and no edges, by $\bot$.

Those rooted trees that can be interpreted as merge trees for some domain of dimension at least~$2$ will be called \emph{abstract} merge trees. They are the center objects of this paper.

\begin{definition}
An unordered, rooted tree $T$ of (in general) arbitrary degree with edge labels $\ell:V(T) \rightarrow \mathbb{R}_{>0}$ is an \emph{Abstract Merge Tree} if the following properties hold:
\begin{itemize}
    \item The root node has degree one, $\deg_T( \troot (T) ) = 1$
    \item All inner nodes have a degree of at least two,\\ $\deg_T(v) \neq 1$ for all $v \in V(T)$ with $v \neq \troot (T)$
\end{itemize}
\end{definition}

Note that we do not use node labels representing the scalar value of the original critical points but rather edge labels directly representing the persistence of edges (we use the term persistence for the length of edges and paths, since this similarity measure is just an adaptation of the persistence of branches and to distinguish it from the length of paths, i.e.\ number of edges, in abstract trees). We chose this for two reasons: first, persistence of edges or paths are the properties of interest anyway and second, this gives a lot of flexibility in terms of representing similar properties like edge volume or region size via the same abstract definition. Since we are usually not interested in shifts of the whole merge tree to higher scalar values, this should not influence the practicality of the definition. Throughout this paper, we will often just write $T$ instead of $T,\ell$, but if we do, it should be clear from the context that the abstract merge tree has a label function attached.

Since the root of an abstract merge tree always has degree one and inner nodes do not, subtrees rooted in an inner node are not abstract merge trees themselves.
Therefore, we identify subtrees by root edges, rather than root nodes:
Formally, for a node $p \in V(T)$ with child $c \in V(T)$, the subtree rooted in $(c,p) \in E(T)$, denoted by $T[(c,p)]$ is defined by the vertex set
$$V(T[(c,p)])=\{ c,p \} \cup \{ v \mid c \text{ is an ancestor of } v \}$$
and the induced edge set.
Given an abstract merge tree $T$ with subtree $T'$ rooted in the edge $(c,p)$, we define $T-T'$ to be the tree $T''$, which we obtain by removing all edges and all vertices of $T'$ from $T$ except the root $p$.
If $\deg_T(p) = 2$,
then we also remove it from $T''$, as otherwise $p$ would be an inner node of degree one in $T''$. With this definition, it holds that
$T'$ and $T''$ are abstract merge trees as well.

As for general graphs, a \emph{path} of length $k$ in an abstract merge tree $T$ is a sequence of vertices $p=v_1 ... v_k \in V(T)^k$ with $(v_{i},v_{i-1}) \in E(T)$ for all $2 \leq i \leq k$ and $v_i \neq v_j$ for all $1 \leq i,j \leq k$. Note the strict root-to-leaf direction: we only consider monotone paths.
For a path $p=v_1 ... v_k$, we denote its first vertex by $\pathstart(p) \coloneqq v_1$, its last vertex by $\pathend(p) \coloneqq v_k$ and the set of all paths of a tree $T$ by $\mathcal{P}(T)$.

In an abstract merge tree $T$, each node $v \in V(T)$ has a unique path connecting it to the root of the tree $\troot(T)$. Each node $u \neq v$ on this path is called an \emph{ancestor} of $v$ and we denote the unique path connecting $u$ and $v$ by $\tpath_T(v,u)$.

We lift the label function $\ell$ of an abstract merge tree $T$ from edges to paths in the following way: $\ell(v_1...v_k) = \sum_{2 \leq i \leq k} \ell((v_i,v_{i-1}))$.

\subsection*{Continuous Merge Trees to Abstract Merge Trees}

Now, we show how the two introduced concepts of merge trees relate. As stated above, abstract merge trees represent those trees that can be interpreted as merge tree for some domain. Formally, we denote the continuous merge tree of a scalar field $\mathbb{X},f$ by $\mathcal{T}(\mathbb{X},f)$ and the corresponding abstract merge tree by $T(\mathbb{X},f)$, which we define in the following. Again, we closely stick to the definitions for contour trees in~\cite{carr2004topological}.

The vertex set $V(T(\mathbb{X},f))$ is the set of critical points in $|\mathcal{T}(\mathbb{X},f)|$. For the edges, we have $(u,v) \in E(T(\mathbb{X},f))$ if and only if $f(v)<f(u)$ and there is an $f$-monotone path connecting $u$ and $v$ such that $u$ and $v$ are the only critical points on this path. We define the label $\ell_f((u,v))$ to be $|f(u)-f(v)|$. Note the strong correspondence between an actual merge tree $|\mathcal{T}|$ and the abstract merge tree $T$: the vertices in $V(T)$ are exactly the vertices of the simplicial complex $\mathcal{T}$ and the edges $E(T)$ are exactly the 1-simplices in $\mathcal{T}$. If only a merge tree $|\mathcal{T}|$ is given without the original domain, we can therefore denote by $T(\mathcal{T})$ its corresponding abstract merge tree. Furthermore, for each abstract merge tree $T$, we can define a simplicial complex $\mathcal{T}(T)$ such that $|\mathcal{T}(T)|$ is a merge tree and $T(\mathcal{T}(T)) = T$ by using an arbitrary embedding/drawing of the abstract merge tree.

\section{Deformation-based Edit Distance}

In this section, we introduce the new edit distance for merge trees. First, we define two models of edit operations: an intuitive description for a set of deformations on continuous merge trees, which is mainly used to guide the motivation of the new distance, as well as a formal definition of edit operations on abstract merge trees. Next, we study the relationship of the two models before presenting the underlying algorithmic concept of path mappings and their recursive structure. We postpone an in-depth discussion of the choice of edit operations to Section~\ref{section:discussion}.

\subsection{Edit Operations}

We begin by defining two sets of edit operations on merge trees, one of which works on the continuous objects, i.e.\ merge trees as the quotient spaces of the original domains, and the other one works on discrete abstract merge trees and is used for computation.

\subsubsection*{Continuous Edit Operations}
For continuous trees, the goal is to define an intuitive set of edit operations, which should be able to transform any two merge trees into each other and operate only locally on edges and nodes. Similar to deletions and insertions on arbitrary trees, we need edit operations changing the size of the tree. Since the labels in merge trees are the lengths of edges, relabel operations fall into the same category. Therefore, we identified only two types of operations:
\begin{itemize}
    \item Shrinking an arc, possibly deleting it.
    \item Extending an arc or inserting a new one.
\end{itemize}
These operations have the following intuition: they strongly resemble deformation retractions on merge trees: since a deformation retraction on a merge tree always yields another merge tree (or a single point, which is formally also a tree), all they can do is shrink edges, or possibly remove them. The corresponding inverse transformations are extending or inserting arcs. Hence, a sequence of shrinking operations transforms $\mathcal{T}_1$ into $\mathcal{T}_2$ if and only if $\mathcal{T}_2$ is a deformation retract of $\mathcal{T}_1$, and conversely, a sequence of extending operations transforms $\mathcal{T}_1$ into $\mathcal{T}_2$ if and only if $\mathcal{T}_1$ is a deformation retract of $\mathcal{T}_2$. Next, we define the equivalent model on abstract merges and then prove the correspondence to deformation retractions on this model.

\begin{figure*}[t]
 
 \centering
 \resizebox{0.9\linewidth}{!}{
 \begin{tikzpicture}[yscale=0.65]
 \definecolor{lightred}{rgb}{1,0.6,0.6}
 
 
 
 \node[draw,circle,fill=gray!70] at (0, 1) (root) {\textbf{\Large A}};
 \node[draw,circle,fill=gray!70] at (0, 4) (s1) {\textbf{\Large B}};
 \node[draw,circle,fill=gray!70] at (-1, 7) (s2) {\textbf{\Large C}};
 
 \node[draw,circle,fill=red!70] at (-2, 10) (m1) {\textbf{\Large D}};
 \node[draw,circle,fill=red!70] at (0, 8.5) (m2) {\textbf{\Large E}};
 \node[draw,circle,fill=red!70] at (2, 9) (m3) {\textbf{\Large F}};
 
 \draw[green,line width=3pt] (root) -- (s1);
 \draw[purple,line width=3pt] (s1) -- (s2);
 \draw[blue,line width=3pt] (s2) -- (m1);
 \draw[cyan,line width=3pt] (s2) -- (m2);
 \draw[orange,line width=3pt] (s1) -- (m3);
 
 
 \node[draw,circle,fill=gray!70] at (8+0, 1) (root) {\textbf{\Large A}};
 \node[draw,circle,fill=gray!70] at (8+0, 4) (s1) {\textbf{\Large B}};
 \node[draw,circle,fill=gray!70] at (8-1, 7) (s2) {\textbf{\Large C}};
 
 \node[draw,circle,fill=red!70] at (8-2, 10) (m1) {\textbf{\Large D}};
 \node[draw,circle,fill=red!70] at (8+0, 8.5) (m2) {\textbf{\Large E}};
 
 \draw[green,line width=3pt] (root) -- (s1);
 \draw[purple,line width=3pt] (s1) -- (s2);
 \draw[blue,line width=3pt] (s2) -- (m1);
 \draw[cyan,line width=3pt] (s2) -- (m2);
 
 
 \node[draw,circle,fill=gray!70] at (16+0, 1) (root') {\textbf{\Large A}};
 \node[draw,circle,fill=gray!70] at (16+0, 7) (s1') {\textbf{\Large C}};
 
 \node[draw,circle,fill=red!70] at (16-2, 10) (m1') {\textbf{\Large D}};
 \node[draw,circle,fill=red!70] at (16+2, 8.5) (m2') {\textbf{\Large E}};
 
 \draw[green,line width=3pt] (root') -- (16,4);
 \draw[purple,line width=3pt] (16,4) -- (s1');
 \draw[blue,line width=3pt] (s1') -- (m1');
 \draw[cyan,line width=3pt] (s1') -- (m2');
 
 
 \node[draw,circle,fill=gray!70] at (24+0, 1) (root') {\textbf{\Large A}};
 \node[draw,circle,fill=gray!70] at (24+0, 7) (s1') {\textbf{\Large C}};
 
 \node[draw,circle,fill=red!70] at (24-2, 10) (m1') {\textbf{\Large D}};
 
 \draw[green,line width=3pt] (root') -- (24,4);
 \draw[purple,line width=3pt] (24,4) -- (s1');
 \draw[blue,line width=3pt] (s1') -- (m1');
 
 
 \node[draw,circle,fill=gray!70] at (32+0, 1) (root') {\textbf{\Large A}};
 
 \node[draw,circle,fill=red!70] at (32+0, 10) (m1') {\textbf{\Large D}};
 
 \draw[green,line width=3pt] (root') -- (32,4);
 \draw[purple,line width=3pt] (32,4) -- (32,7);
 \draw[blue,line width=3pt] (32,7) -- (m1');
 
 \draw[->,line width=2.5pt] (0.5,10.5) to[bend left=30] node [midway,below=3pt] {\huge Delete (F,B)} (15.5,10.5);
 \draw[->,line width=2.5pt] (15.5,0) to[bend left=30] node [midway,above=3pt] {\huge Insert (F,B)} (0.5,0);
 
 \draw[->,line width=2.5pt] (16.5,10.5) to[bend left=30] node [midway,below=3pt] {\huge Delete (E,C)} (31.5,10.5);
 \draw[->,line width=2.5pt] (31.5,0) to[bend left=30] node [midway,above=3pt] {\huge Insert (E,C)} (16.5,0);
 
 \draw[->,dashed,line width=2.5pt] (1.5,6) -- (6.5,6);
 \draw[->,dashed,line width=2.5pt] (6.5,3) -- (1.5,3);
 \draw[->,dashed,line width=2.5pt] (9.5,6) -- (14.5,6);
 \draw[->,dashed,line width=2.5pt] (14.5,3) -- (9.5,3);
 
 \draw[->,dashed,line width=2.5pt] (17.5,6) -- (22.5,6);
 \draw[->,dashed,line width=2.5pt] (22.5,3) -- (17.5,3);
 \draw[->,dashed,line width=2.5pt] (25.5,6) -- (30.5,6);
 \draw[->,dashed,line width=2.5pt] (30.5,3) -- (25.5,3);

 \end{tikzpicture}
 }
 
 \caption{Deriving a path mapping from edit operations: If we delete the edge $(F,B)$ in the leftmost tree, this happens in two steps. First, we remove the edge and the leaf node $F$, then we merge the edges $(C,B)$ and $(B,A)$, since $B$ has degree one. The merging step induces a clear correspondence between the path $ABC$ and the edge $(C,A)$. This correspondence is transitive for multiple edit operations, hence, we can map the path $ABCD$ to the edge $(D,A)$ after deleting $(E,C)$. The mappings for insert operations can be derived in the same way.}
\label{fig:pathMappingIllustration}
 
\end{figure*}
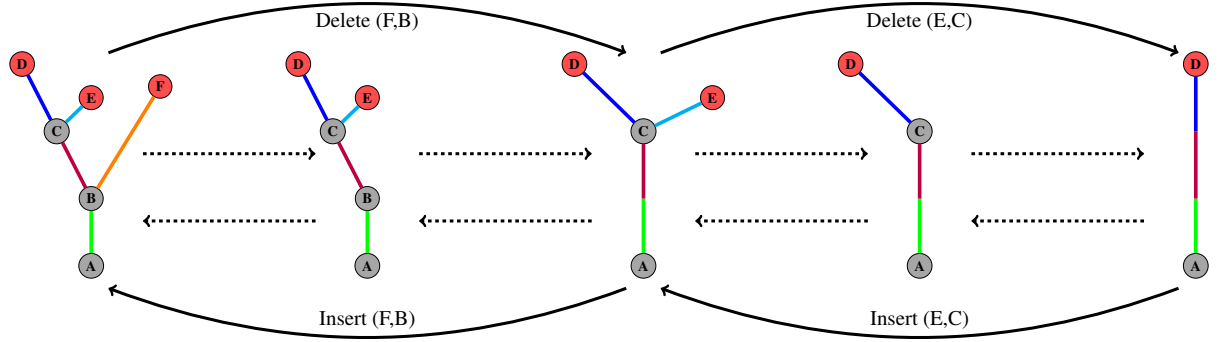

\subsubsection*{Abstract Edit Operations}
The edit distance for \emph{abstract} merge trees is now defined in a more formal way: we consider the following edit operations that transform an abstract merge tree $T,\ell$ into another abstract merge tree $T',\ell'$:
\begin{itemize}
    \item Edge relabel: changing the length of an edge $(c,p)$ to a new value $v \in \mathbb{R}_{>0}$, i.e.\ $T'=T$, $\ell'((c,p)) = v$ and $\ell'(e)=\ell(e)$ for all $e \neq (c,p)$.
    \item Edge contraction: remove an edge from the tree and merge the two nodes. Then, remove the parent node if it had only two children originally. Formally, for a node $p$ with children $c_0...c_k$ and parent $p'$, we define $T'$ after contracting $(c_i,p)$ as follows: if $k>1$, we have
    $$V(T') = V(T) \setminus \{c_i\}, E(T') = E(T) \setminus \{(c_i,p)\},$$
    and otherwise, if $k=1$, we have
    $$V(T') = V(T) \setminus \{c_i,p\},$$
    $$E(T') = (E(T) \cup \{(c_{1-i},p')\}) \setminus \{(c_i,p),(c_{1-i},p),(p,p')\}.$$
    Furthermore, $\ell' = \ell$ if $k>1$, otherwise
    $$\ell'((c_{1-i},p')) = \ell((p,p'))+\ell((c_{1-i},p)),$$ and $\ell'(e)=\ell(e)$ for all $e \neq (c_{1-i},p')$.
    \item Inverse edge contraction: inverse operation to edge contraction.
\end{itemize}
We also call edge contractions deletions and inverse edge contractions insertions, to have a more intuitive naming that also fits better to the classic edit operations on node-labeled trees. The three types of edit operations are illustrated on an example tree in Figure~\ref{fig:edit_operations_abstract}.
If a sequence of edit operations $S$ transforms an abstract merge tree $T_1$ into $T_2$, we denote this by $T_1 \xrightarrow{\scriptscriptstyle S} T_2$.

We define the costs of the edit operations between abstract merge trees as the change in persistence. As for classic edit distances, we denote the edit operations by pairs of two labels for relabel operations or pairs of a label and a blank symbol for deletions or insertions. We use $\mathbb{R}_{>0}$ as the label set for abstract merge trees and~$0$ as the blank symbol. Then, we define the cost function simply as the euclidean distance on $\mathbb{R}_{>0} \cup \{0\}$: $ \cost(l_1,l_2) = |l_1-l_2| $ for all $l_1,l_2 \in \mathbb{R}_{\geq 0}$.
This means, for a deletion or insertion we charge the persistence of the edge, whereas for a relabel we charge the persistence difference between the old and new edge.
We should note that abstract merge trees allow for other labels than persistence, but the cost function can be easily adapted to suit this usecase. For this paper, we restrict to persistence labels.

\subsubsection*{Model Relation}
The edit operations from the two models can be transformed into each other in an intuitive way. Since nodes in an abstract merge tree correspond to critical points of the continuous trees, shortening and extending operations are mapped to delete and insert operations if and only if at least one critical point of the continuous tree disappears and they are mapped to relabel operations otherwise. For the other direction, all edit operations on abstract merge trees are mapped to continuous operations in the obvious way.

Furthermore, the abstract operations can also be related to deformation retracts on continuous merge trees. The connection is the following:

\begin{theorem}
Let $\mathcal{T}_1,f_1$ and $\mathcal{T}_2,f_2$ be merge trees with abstractions $T_1=T(\mathcal{T}_1),\ell_{f_1}$ and $T_2=T(\mathcal{T}_2),\ell_{f_2}$. If $|\mathcal{T}_2|$ is homeomorphic to a deformation retract of $|\mathcal{T}_1|$, then there is a sequence of edit operations $S$ only containing deletions and relabels that decrease the edge labels such that $T_1,\ell_{f_1} \xrightarrow{\scriptscriptstyle S} T_2,\ell_{f_2}$.

Furthermore, given two abstract merge trees $T_1,\ell_1$ and $T_2,\ell_2$ with $T_1,\ell_1 \xrightarrow{\scriptscriptstyle S} T_2,\ell_2$ and $S$ only containing deletions and relabels that decrease the edge labels, then there are merge trees $\mathcal{T}_1,f_1$ and $\mathcal{T}_2,f_2$ with $T_1=T(\mathcal{T}_1),\ell_1=\ell_{f_1}$ and $T_2=T(\mathcal{T}_2),\ell_2=\ell_{f_2}$, such that $|\mathcal{T}_2|$ is homeomorphic to a deformation retract of $|\mathcal{T}_1|$.
\end{theorem}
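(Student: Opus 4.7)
The plan is to prove both directions by matching abstract edit operations to elementary geometric deformations of the embedded $1$-complex. The key observation is that the two allowed operation types, deletions and decreasing relabels, are exactly the two elementary ways of shrinking a scalar-embedded $1$-complex while keeping it a merge tree: cutting off a leaf subtree (deletion) and pulling the top end of an edge down to a lower scalar value (decreasing relabel). Both are deformation retractions, and compositions of finitely many deformation retractions are deformation retractions, so the two directions reduce to (i) realising an abstract edit sequence as a concrete composed retraction, and (ii) reading such a sequence off an arbitrary geometric deformation retract.

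For the backward direction (edit sequence implies deformation retract), I would fix any scalar-preserving embedding of $T_1,\ell_1$ as some $\mathcal{T}_1,f_1$ and process $S=o_1,\dots,o_n$ step by step, maintaining at each stage an embedded merge tree whose abstraction is the current intermediate abstract tree. A decreasing relabel of $(c,p)$ from $\ell$ to $\ell'<\ell$ is realised by the straight-line retraction of the sub-arc of the embedded edge between scalar levels $f(p)+\ell'$ and $f(p)+\ell$ onto its lower endpoint. A deletion of $(c,p)$ is realised by deformation-retracting the (contractible) embedded subtree over $(c,p)$ onto the single point $p$; in the subcase where $p$ thereby becomes degree one, the resulting embedded space coincides, as a scalar-embedded $1$-complex, with that of the abstract tree after $p$ has been suppressed, because suppressing an interior degree-two vertex never changes the underlying space. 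Composing all of these retractions yields a deformation retraction of $|\mathcal{T}_1|$ onto an embedded merge tree realising $T_2,\ell_2$.

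For the forward direction, I would start from a deformation retract $A\subseteq |\mathcal{T}_1|$ together with a homeomorphism $h:|\mathcal{T}_2|\to A$ (taken to preserve the inherited scalar function, which is the reading consistent with the preceding intuitive discussion of continuous operations). After passing to a suitable subdivision of $\mathcal{T}_1$, $A$ becomes a connected subcomplex containing the root. From $h$ one extracts a combinatorial correspondence: every critical point of $\mathcal{T}_2$ maps either to a critical point of $\mathcal{T}_1$ or to the interior of an edge of $\mathcal{T}_1$, and every edge of $T_2$ maps to a monotone root-to-leaf sub-path in $T_1$. The edit sequence is then built in two phases. First, a preparatory relabel is applied to each edge of $T_1$ whose upper endpoint corresponds under $h$ to an interior point of the edge (strictly decreasing, because $h$ preserves $f_1$ and $A\subseteq |\mathcal{T}_1|$). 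Second, a leaf-first sequence of deletions removes every subtree of $T_1$ not in the image of $h$, and the deletion operation's built-in suppression of a resulting degree-two parent automatically collapses intermediate saddles along a sub-path. Termination and correctness follow by induction on the number of critical points of $T_1$ not in the image of $h$.

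The main obstacle is the forward direction: extracting a clean combinatorial correspondence from an arbitrary deformation retract and scheduling the resulting operations so that every deletion acts on a current leaf and every relabel is strictly decreasing. The structural fact that needs justification is that the preparatory relabels commute with any ordering of the subsequent leaf-first deletions, which in turn follows from $A$ being connected, containing the root, and closed under the ancestor relation induced by $f_1$; this guarantees that the image of $h$ forms a genuine sub-merge-tree of $\mathcal{T}_1$ rather than a disconnected pattern of local modifications, and hence that the two-phase construction succeeds.
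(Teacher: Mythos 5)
Your forward direction is essentially sound, and it isolates the right structural fact: since $A$ is a connected subspace of the tree $|\mathcal{T}_1|$, it meets each arc of $\mathcal{T}_1$ in a (possibly empty or improper) sub-arc, and whenever an upper portion of an arc is omitted from $A$, the entire subtree above that arc is omitted as well. This is exactly what guarantees that every relabel you need is strictly decreasing and that the leaf-first deletions, via their built-in suppression of degree-two parents, assemble each edge of $T_2$ from a monotone path of $T_1$ of at least its length. You are also right that the homeomorphism must be read as scalar-preserving; with a plain homeomorphism two single-edge trees of lengths $1$ and $100$ would already refute the forward implication.

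The genuine gap is in the backward direction, at decreasing relabels of \emph{interior} edges. Your proposed realization --- the ``straight-line retraction of the sub-arc of the embedded edge between levels $f(p)+\ell'$ and $f(p)+\ell$ onto its lower endpoint'' --- is not a deformation retraction of $|\mathcal{T}_1|$ onto a subspace when $c$ is not a leaf: the complement of that open sub-arc is disconnected, and the component containing $p$ omits the whole subtree above $c$, which is a deletion, not a relabel. Nor is this a repairable scheduling issue. Take $T_1$ with root $r$, saddles $s_1,s_2$, labels $\ell((s_1,r))=\ell((s_2,s_1))=5$, and leaves attached to both saddles, and let $S$ consist of the single relabel of $(s_2,s_1)$ to $3$. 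Every branch point of a subspace of $|\mathcal{T}_1|$ is a branch point of $|\mathcal{T}_1|$, and those lie at scalar values differing by $5$, whereas the two saddles of $T_2$ must differ by $3$; so no scalar-preserving homeomorphism of $|\mathcal{T}_2|$ onto any subspace of $|\mathcal{T}_1|$ exists, for any choice of realizations. A correct argument must therefore either show that interior relabels can be avoided (the example shows they cannot) or adopt and explicitly state a weaker notion of deformation --- an ambient deformation of the embedded merge tree that carries the subtree above $c$ downward together with the shrinking arc, i.e.\ a continuous one-parameter family of merge trees rather than a retraction onto a subspace carrying the restricted scalar function --- and then verify both directions against that notion. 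As written, your composition-of-subspace-retractions argument does not go through for this case.
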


\begin{proof}
See supplementary material, App.~A.
\end{proof}

The same also holds for insert operations and label-increasing relabels, with the adaption that $\mathcal{T}_1$ is a deformation retract of $\mathcal{T}_2$. This leads to another interesting property of the new edit distance: classic edit distances or edit mappings represent the largest common subtree, which means that our interpretation of the merge tree edit distances yields something like the largest common deformation retract, where the size of a merge tree is its total persistence.

\subsubsection*{Edit Distance}
The edit distance between two abstract merge trees $T_1,T_2$ is defined to be the minimal cost of an edit sequence transforming $T_1$ into $T_2$:
$$ \edist(T_1,T_2) = \min \{ \cost(S) \mid T_1 \xrightarrow{\scriptscriptstyle S} T_2 \}. $$
We call $S$ a one-degree edit sequence, if all insertions and deletions happen on edges connecting a leaf node. The edit distance based on these sequences is called one-degree edit distance:
$$ \odedist(T_1,T_2) = \min \{ \cost(S) \mid T_1 \xrightarrow{\scriptscriptstyle S} T_2,\ S \text{ is one-degree} \}. $$

Since edit sequences can be concatenated and the costs just add up, they are a metric for abstract merge trees. Since one-degree edit sequences do not restrict the possibility of concatenation, the one-degree edit distance is a metric, too.

\begin{theorem}

$\edist$ and $\odedist$ are metrics on the set of all abstract merge trees.

\end{theorem}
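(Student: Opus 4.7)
The plan is to verify the four metric axioms separately for $\edist$, then observe that the same argument goes through for $\odedist$ because concatenation preserves the one-degree property.

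\textbf{Non-negativity} is immediate from $\cost(l_1,l_2)=|l_1-l_2|\ge 0$. For \textbf{identity of indiscernibles}, one direction is clear: if $T_1=T_2$ (including equal label functions), the empty edit sequence witnesses $\edist(T_1,T_2)=0$. For the converse, I would observe that every insertion or deletion costs $|l-0|=l>0$ because edge labels lie in $\mathbb{R}_{>0}$, and every relabel $(l_1,l_2)$ with nonzero cost has $l_1\neq l_2$. Hence a zero-cost sequence consists only of relabels that leave each edge length unchanged, so $T_1,\ell_1$ and $T_2,\ell_2$ must coincide.

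\textbf{Symmetry} follows from the fact that each of the three edit operations admits an inverse operation of equal cost: a relabel $(l_1,l_2)$ is reversed by $(l_2,l_1)$, and insertions and contractions are defined as mutual inverses, both charged the persistence of the affected edge. Given a sequence $S$ with $T_1\xrightarrow{S}T_2$ of cost $c$, reversing the order and replacing each operation by its inverse yields a sequence $S^{-1}$ with $T_2\xrightarrow{S^{-1}}T_1$ and $\cost(S^{-1})=\cost(S)$; taking the minimum over $S$ gives $\edist(T_2,T_1)\le \edist(T_1,T_2)$, and the reverse inequality by the same argument.

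\textbf{Triangle inequality} is the key step and uses concatenation: if $S_1$ witnesses $T_1\xrightarrow{S_1}T_2$ and $S_2$ witnesses $T_2\xrightarrow{S_2}T_3$, then the concatenation $S_1\cdot S_2$ satisfies $T_1\xrightarrow{S_1\cdot S_2}T_3$ with $\cost(S_1\cdot S_2)=\cost(S_1)+\cost(S_2)$. Taking the infimum over both yields $\edist(T_1,T_3)\le \edist(T_1,T_2)+\edist(T_2,T_3)$. The only subtlety I expect is the pedantic check that edit operations really compose as claimed — that is, that $T_1\xrightarrow{S_1}T_2$ and $T_2\xrightarrow{S_2}T_3$ imply the concatenated sequence is still a valid sequence of edit operations on abstract merge trees — but this is immediate from the definition, since each intermediate tree in $S_1\cdot S_2$ is an abstract merge tree by assumption.

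Finally, for $\odedist$: restricting to one-degree sequences, all of the above constructions stay within the one-degree class. Inverses of one-degree operations are one-degree, and the concatenation of two one-degree sequences is one-degree, so the same four-step argument applies verbatim and gives the metric property for $\odedist$.
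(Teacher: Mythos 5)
Your proof is correct and follows essentially the same route as the paper, which justifies the theorem solely by the observation that edit sequences concatenate with additive cost and that concatenation preserves the one-degree property; you simply make explicit the remaining axioms (non-negativity, identity of indiscernibles via the strict positivity of labels in $\mathbb{R}_{>0}$, and symmetry via inverse operations), which the paper leaves implicit.
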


In~\cite{DBLP:journals/ipl/ZhangSS92} it has been shown that the problem of computing the general edit distance on unordered, node-labeled trees is NP-hard. Therefore, constrained versions like tree alignments~\cite{DBLP:conf/cpm/JiangWZ94} and the constrained edit distance~\cite{DBLP:journals/algorithmica/Zhang96} have been introduced. The one-degree edit distance~\cite{DBLP:journals/ipl/Selkow77} is a special case of the constrained edit distance and has tractable algorithms even for unordered trees. Due to the high complexity of unconstrained tree edit distances, merge tree edit distances are usually defined using one of the three kinds of constrained versions, see~\cite{DBLP:journals/cgf/SaikiaSW14,DBLP:journals/cgf/LohfinkWLWG20,DBLP:journals/tvcg/SridharamurthyM20,DBLP:journals/tvcg/PontVDT22,wetzels2022branch}. For the same reason, we will use $\odedist$ instead of $\edist$ throughout the rest of this paper. Intuitively, one-degree edit distances capture strongly connected subtrees instead of ancestor-preserving subtrees (which allow for gaps).

\begin{figure*}[!ht]
 
 \centering 
 \resizebox{\linewidth}{!}{
 \begin{tikzpicture}[yscale=0.75]
 \definecolor{lightred}{rgb}{1,0.6,0.6}
 
 
 \node[draw,circle,fill=gray!70] at (0, 1) (root) {\textbf{\Large A}};
 \node[draw,circle,fill=gray!70] at (0, 4) (s1) {\textbf{\Large B}};
 \node[draw,circle,fill=gray!70] at (-1, 6) (s2) {\textbf{\Large C}};
 
 \node[draw,circle,fill=red!70] at (-2, 10) (m1) {\textbf{\Large D}};
 \node[draw,circle,fill=red!70] at (0, 8) (m2) {\textbf{\Large E}};
 \node[draw,circle,fill=red!70] at (2, 9) (m3) {\textbf{\Large F}};
 
 \draw[blue,line width=3pt] (root) -- (s1);
 \draw[blue,line width=3pt] (s1) -- (s2);
 \draw[blue,line width=3pt] (s2) -- (m1);
 \draw[gray,line width=3pt] (s2) -- (m2);
 \draw[orange,line width=3pt] (s1) -- (m3);
 
 
 \node[draw,circle,fill=gray!70] at (6+0, 1) (root') {\textbf{\Large A}};
 \node[draw,circle,fill=gray!70] at (6+0, 4) (s1') {\textbf{\Large B}};
 
 \node[draw,circle,fill=red!70] at (6-2, 10) (m1') {\textbf{\Large D}};
 \node[draw,circle,fill=red!70] at (6+2, 9) (m3') {\textbf{\Large F}};
 
 \draw[blue,line width=3pt] (root') -- (s1');
 \draw[blue,line width=3pt] (s1') -- (m1');
 \draw[orange,line width=3pt] (s1') -- (m3');
 
 
 \node[draw,circle,fill=gray!70] at (13+0, 1) (root'') {\textbf{\Large A}};
 \node[draw,circle,fill=gray!70] at (13+0, 4) (s1'') {\textbf{\Large B}};
 \node[draw,circle,fill=gray!70] at (13-1, 6) (s2'') {\textbf{\Large C}};
 \node[draw,circle,fill=gray!70] at (13-1.5, 8) (s3'') {\textbf{\Large D}};
 
 \node[draw,circle,fill=red!70] at (13-2, 10) (m1'') {\textbf{\Large E}};
 \node[draw,circle,fill=red!70] at (13+0, 8) (m2'') {\textbf{\Large F}};
 \node[draw,circle,fill=red!70] at (13+2, 9) (m3'') {\textbf{\Large G}};
 \node[draw,circle,fill=red!70] at (13-1, 9.5) (m4'') {\textbf{\Large H}};
 
 \draw[green,line width=3pt] (root'') -- (s1'');
 \draw[blue,line width=3pt] (s1'') -- (s2'');
 \draw[blue,line width=3pt] (s2'') -- (s3'');
 \draw[blue,line width=3pt] (s3'') -- (m1'');
 \draw[gray,line width=3pt] (s3'') -- (m4'');
 \draw[gray,line width=3pt] (s2'') -- (m2'');
 \draw[orange,line width=3pt] (s1'') -- (m3'');
 
 
 \node[draw,circle,fill=gray!70] at (19+0, 1) (root'') {\textbf{\Large A}};
 \node[draw,circle,fill=gray!70] at (19+0, 4) (s1'') {\textbf{\Large B}};
 \node[draw,circle,fill=gray!70] at (19-1, 7) (s2'') {\textbf{\Large C}};
 
 \node[draw,circle,fill=red!70] at (19-2, 10) (m1'') {\textbf{\Large E}};
 \node[draw,circle,fill=red!70] at (19+0, 8.5) (m2'') {\textbf{\Large F}};
 \node[draw,circle,fill=red!70] at (19+2, 9) (m3'') {\textbf{\Large G}};
 
 \draw[green,line width=3pt] (root'') -- (s1'');
 \draw[blue,line width=3pt] (s1'') -- (s2'');
 \draw[blue,line width=3pt] (s2'') -- (m1'');
 \draw[gray,line width=3pt] (s2'') -- (m2'');
 \draw[orange,line width=3pt] (s1'') -- (m3'');
 
 
 \node[draw,circle,fill=gray!70] at (26+0, 1) (root'') {\textbf{\Large A}};
 \node[draw,circle,fill=gray!70] at (26+0, 4) (s1'') {\textbf{\Large B}};
 \node[draw,circle,fill=gray!70] at (26-1, 6) (s2'') {\textbf{\Large C}};
 \node[draw,circle,fill=gray!70] at (26-1.5, 8) (s3'') {\textbf{\Large D}};
 
 \node[draw,circle,fill=red!70] at (26-2, 10) (m1'') {\textbf{\Large E}};
 \node[draw,circle,fill=red!70] at (26+0, 8) (m2'') {\textbf{\Large F}};
 \node[draw,circle,fill=red!70] at (26+2, 9) (m3'') {\textbf{\Large G}};
 \node[draw,circle,fill=red!70] at (26-1, 9.5) (m4'') {\textbf{\Large H}};
 
 \draw[green,line width=3pt] (root'') -- (s1'');
 \draw[blue,line width=3pt] (s1'') -- (s2'');
 \draw[blue,line width=3pt] (s2'') -- (s3'');
 \draw[cyan,line width=3pt] (s3'') -- (m1'');
 \draw[gray,line width=3pt] (s3'') -- (m4'');
 \draw[gray,line width=3pt] (s2'') -- (m2'');
 \draw[orange,line width=3pt] (s1'') -- (m3'');
 
 
 \node[draw,circle,fill=gray!70] at (32+0, 1) (root'') {\textbf{\Large A}};
 \node[draw,circle,fill=gray!70] at (32+0, 4) (s1'') {\textbf{\Large B}};
 \node[draw,circle,fill=gray!70] at (32-1, 7) (s2'') {\textbf{\Large C}};
 
 \node[draw,circle,fill=red!70] at (32-2, 10) (m1'') {\textbf{\Large E}};
 \node[draw,circle,fill=red!70] at (32+0, 8.5) (m2'') {\textbf{\Large F}};
 \node[draw,circle,fill=red!70] at (32+2, 9) (m3'') {\textbf{\Large G}};
 
 \draw[green,line width=3pt] (root'') -- (s1'');
 \draw[blue,line width=3pt] (s1'') -- (s2'');
 \draw[cyan,line width=3pt] (s2'') -- (m1'');
 \draw[gray,line width=3pt] (s2'') -- (m2'');
 \draw[orange,line width=3pt] (s1'') -- (m3'');
 
 
 \node[] at (3, 3) (label1) {\Huge\textbf{(a)}};
 \node[] at (16, 3) (label1) {\Huge\textbf{(b)}};
 \node[] at (29, 3) (label1) {\Huge\textbf{(c)}};

 \end{tikzpicture}
 }
 
 \caption{Three examples for valid and invalid path mappings. Mapped paths are indicated through the coloring. In (a), an invalid mapping is shown, which contradicts Lemma~\ref{lemma:simplePathProperty}, since condition~3 is not fulfilled for the path $(B,F)$. A valid mapping can be seen in (b). The mapping in~(c) fulfills the property in Lemma~\ref{lemma:simplePathProperty}, but not Lemma~\ref{lemma:optPathMappingsAreAbstractMergeTrees}, since the paths $(B,D)$ and $(D,E)$ can be merged to $(B,E)$ without increasing the cost.}
\label{fig:pathLemma}
 
\end{figure*}
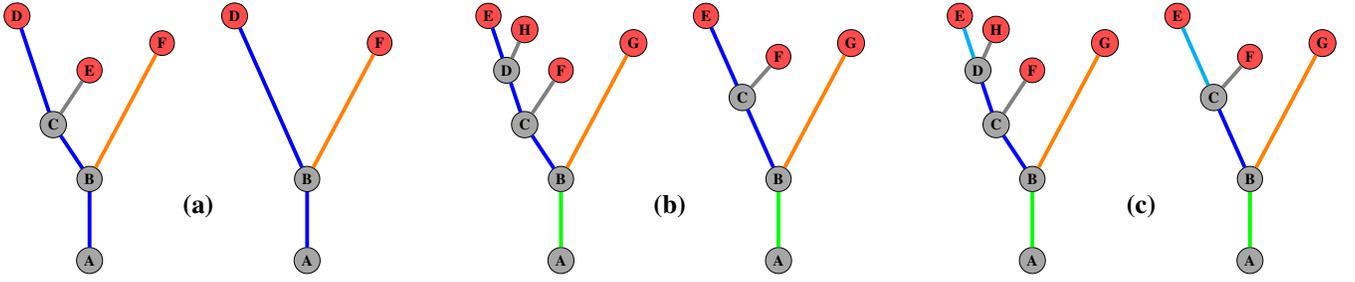

\subsection{Path Mappings}
\label{section:pathmappings}

Edit sequences using the edit operations defined in the last section induce mappings between abstract merge trees in a similar way as edit sequences for the classic tree edit distance do. However, in contrast to mappings between the edges or nodes of the trees, we get mappings between the paths of two abstract merge trees. To see why, consider the definition of an edge contraction. If the remaining node has only one child, we remove it from the tree and connect its only child to its parent. The new edge in the resulting tree is created by merging two edges in the original tree. This relation is represented by the mapping. Hence, the new edge in the resulting tree is mapped to a path of length~$2$ in the original tree, consisting of the two edges that were merged. This correspondence also works transitively for multiple operations and also for inverse edge contractions the other way around. Figure~\ref{fig:pathMappingIllustration} illustrates this correspondence and how to derive a path mapping from a sequence of edit operations. Next, we will study path mappings in a more formal manner, starting with a definition.

\begin{definition}
\label{definiton:path_mappings}

Given two abstract merge trees $T_1,T_2$, a path mapping between $T_1$ and $T_2$ is a mapping $M \subseteq \mathcal{P}(T_1) \times \mathcal{P}(T_2)$ such that
\begin{enumerate}
    \item $p_1=p_2$ if and only if $q_1=q_2$ for all $(p_1,q_1),(p_2,q_2) \in M$,
    \item $|p_1 \cap p_2| \leq 1$ and $|q_1 \cap q_2| \leq 1$ for all $(p_1,q_1),(p_2,q_2) \in M$,
    \item for all $(p,q) \in M$,
    \begin{itemize}
        \item either there are paths $p' \in \mathcal{P}(T_1)$ and $q' \in \mathcal{P}(T_2)$ such that $(p',q') \in M$ and $\pathstart(p) = \pathend(p')$ and $\pathstart(q) = \pathend(q')$, 
        \item or $\pathstart(p) = \troot(T_1)$ and $\pathstart(q) = \troot(T_2)$.
    \end{itemize}
\end{enumerate}
For a path mapping $M$ between two abstract merge trees $T_1,\ell_1$ and $T_2,\ell_2$, we also define its corresponding edit operations $\mappingedit(M)$. They consist of the corresponding relabel, insert and delete operations:
\[
\begin{aligned}
\mappingrelabel(M) &= \{ (\ell_1(p_1),\ell_2(p_2)) \mid (p_1,p_2) \in M \}, \\
\mappinginsert(M) &= \{ (0,\ell_2(e_2)) \mid e_2 \in E(T_2),\ \nexists p_1 \in \mathcal{P}(T_1): (p_1,e_2) \in M \}, \\
\mappingdelete(M) &= \{ (\ell_1(e_1),0) \mid e_1 \in E(T_1),\ \nexists p_2 \in \mathcal{P}(T_2): (e_1,p_2) \in M \}.
\end{aligned}
\]
Then we have $\mappingedit(M) = \mappingrelabel(M) \cup \mappinginsert(M) \cup \mappingdelete(M)$.
Furthermore, we define the costs of a mapping through the corresponding edit operations:
$$ \cost(M) = \sum_{(l_1,l_2) \in \mappingedit(M)} |\cost(l_1,l_2)| $$

\end{definition}

We say a path $p_1 \in \mathcal{P}(T_1)$ is contained in a path mapping $M \subseteq \mathcal{P}(T_1) \times \mathcal{P}(T_2)$, if there is a path $p_2 \in \mathcal{P}(T_2)$ with \mbox{$(p_1,p_2) \in M$.} Also, $p_2$ is contained in $M$ if the symmetrical condition holds.
We say the subtree $T_1[(u,v)]$ is not contained in $M$ if for any $p' \in \mathcal{P}(T_1[(u,v)])$, $p'$ is not contained in $M$.

A core property of \emph{optimal} path mappings is that the contained paths are not unnecessarily partitioned, i.e.\ if we interpret them as edges, these edges form an abstract merge tree without degree one nodes (except the root). Intuitively, the reason for this can be seen from the way paths are mapped in Figure~\ref{fig:pathMappingIllustration}: all subtrees branching from mapped paths ($ABCD,ABC,ACD$) are either deleted or inserted, i.e.\ they are not present in the mapping. In other words, mapped branches do not start within other mapped paths. This property is given in Lemma~\ref{lemma:simplePathProperty}. Furthermore, all mapped paths either end in a leaf node (e.g.\ $ABCD$) or end in the starting vertex of at least two other mapped paths (e.g.\ $ABC$ splits into $CD$ amd $ED$, both also present in the mapping). This is formalized in Lemma~\ref{lemma:optPathMappingsAreAbstractMergeTrees}. The properties are also illustrated in Figure~\ref{fig:pathLemma}. We now discuss this more formally.

First, we note that the following property directly follows from the definition of path mappings (conditions 2 and 3).

\begin{lemma}
\label{lemma:simplePathProperty}

Let $M \subseteq \mathcal{P}(T_1) \times \mathcal{P}(T_2)$ be an optimal path mapping between two abstract merge trees $T_1,T_2$. For any path \mbox{$p = v_1...v_k \in \mathcal{P}(T_1)$} that is contained in $M$ and any child $c_i \neq v_{i+1}$ of $v_i$ ($1 < i < k$), the subtree $T_1[(c_i,v_i)]$ is not contained in $M$.

Symmetrically, the same holds for any path $p \in \mathcal{P}(T_2)$.

\end{lemma}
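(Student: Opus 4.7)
The plan is a proof by contradiction. Suppose that some $p' \in \mathcal{P}(T_1[(c_i,v_i)])$ is contained in $M$, realized by a pair $(p',q') \in M$. The first step is to iteratively apply condition 3 of Definition~\ref{definiton:path_mappings} to build a chain $(p^{(1)},q^{(1)}),\dots,(p^{(n)},q^{(n)}) = (p',q')$ in $M$ with $\pathstart(p^{(1)}) = \troot(T_1)$ and $\pathend(p^{(j)}) = \pathstart(p^{(j+1)})$ for $1 \le j < n$. Termination is guaranteed by the finiteness of $M$ together with condition 1, which prevents the chain from looping. Since each $p^{(j)}$ is a monotone ancestor-to-descendant path joined to the next at a shared endpoint, the concatenation $p^{(1)}\cdots p^{(n)}$ is exactly the unique monotone tree-path in $T_1$ from $\troot(T_1)$ down to $\pathend(p')$.

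Next, note that $\pathend(p') \in V(T_1[(c_i,v_i)])$ means $\pathend(p')$ is either $v_i$ itself or a descendant of $c_i$, so both $v_{i-1}$ (which exists because $i > 1$) and $v_i$ are ancestors of $\pathend(p')$ and therefore appear consecutively on the tree-path just described. Consequently, exactly one segment $p^{(j^*)}$ of the chain contains the edge $(v_i,v_{i-1})$, i.e., has $v_{i-1}$ and $v_i$ as two consecutive vertices. If $p^{(j^*)} \neq p$, then $\{v_{i-1},v_i\} \subseteq p \cap p^{(j^*)}$, giving $|p \cap p^{(j^*)}| \ge 2$ and directly contradicting condition 2. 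Otherwise $p^{(j^*)} = p$, and then $v_k = \pathend(p)$ would have to lie on the tree-path from $\troot(T_1)$ to $\pathend(p')$, i.e., $v_k$ would be an ancestor of $\pathend(p')$; but $v_k$ is a descendant of $v_{i+1}$, while $\pathend(p')$ lies in the subtree rooted at $c_i$, and since $c_i \neq v_{i+1}$ these two subtrees are disjoint, so $v_k$ cannot be an ancestor of $\pathend(p')$, a contradiction.

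The main obstacle is constructing the predecessor chain correctly and recognising that its concatenation is the unique simple tree-path from $\troot(T_1)$ to $\pathend(p')$; once that geometric picture is in place, the two cases above are immediate consequences of condition 2. The symmetric statement for paths in $T_2$ follows by exchanging the roles of $T_1$ and $T_2$.
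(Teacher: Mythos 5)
Your argument is correct and is precisely the elaboration the paper has in mind: the paper gives no detailed proof, asserting only that the lemma ``directly follows from the definition of path mappings (conditions 2 and 3),'' and your predecessor chain built from condition~3 combined with the two-vertex intersection contradiction from condition~2 is exactly that argument spelled out. No gaps; the symmetric case and the degenerate subcases (e.g.\ $p$ itself being the segment through $(v_i,v_{i-1})$) are handled correctly.
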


Furthermore, any path contained in an optimal path mapping branches into at least two other contained branches. A proof for this claim is provided in the supplementary material.

\begin{lemma}
\label{lemma:optPathMappingsAreAbstractMergeTrees}

Let $M \subseteq \mathcal{P}(T_1) \times \mathcal{P}(T_2)$ be an optimal path mapping between two abstract merge trees $T_1,T_2$. For any path $p \in \mathcal{P}(T_1)$ that is contained in $M$, there are at least two paths $p',p''$ that are contained in $M$ with $\pathstart(p') = \pathstart(p'') = \pathend(p)$.

Symmetrically, the same holds for any path $p \in \mathcal{P}(T_2)$.

\end{lemma}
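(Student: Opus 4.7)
The plan is to argue by contradiction and a local-merging argument. Assume $M$ is an optimal path mapping containing a path $p \in \mathcal{P}(T_1)$ such that fewer than two paths in $M$ start at $v \coloneqq \pathend(p)$. To turn the cost comparison ``$\cost(M') \leq \cost(M)$'' into a genuine contradiction, I would choose $M$ to be the optimal mapping that additionally minimises the number of pairs; any valid mapping with no greater cost and strictly fewer pairs then contradicts this choice.

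The principal case is that exactly one path $p'$ starts at $v$. Let $(p,q),(p',q') \in M$; condition~3 of Definition~\ref{definiton:path_mappings} applied to $(p',q')$ forces $\pathstart(q') = \pathend(q)$, so the concatenations $p \cdot p'$ in $T_1$ and $q \cdot q'$ in $T_2$ are well-defined monotone root-to-leaf paths. I would form
\[ M' \coloneqq (M \setminus \{(p,q),(p',q')\}) \cup \{(p \cdot p',\ q \cdot q')\}. \]
Conditions~1 and~3 of Definition~\ref{definiton:path_mappings} are essentially inherited from $M$. For condition~2, suppose some $(r,s) \in M' \setminus \{(p \cdot p', q \cdot q')\}$ intersected $p \cdot p'$ in two distinct vertices $x, y$. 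Because root-to-leaf paths in an abstract merge tree are strictly monotone, $r$ must then contain the entire sub-chain of $p \cdot p'$ lying between $x$ and $y$, and hence some interior vertex $u$ of $p$ or $p'$. But then $r$ either continues along $p$ (respectively $p'$) past $u$, violating the original $|r \cap p|\leq 1$ (respectively $|r \cap p'|\leq 1$), or deviates into a subtree branching off at $u$, which is ruled out by Lemma~\ref{lemma:simplePathProperty}. Hence $M'$ is a valid path mapping. The cost comparison follows directly from the triangle inequality,
\[ |\ell_1(p) + \ell_1(p') - \ell_2(q) - \ell_2(q')| \leq |\ell_1(p) - \ell_2(q)| + |\ell_1(p') - \ell_2(q')|, \]
so $\cost(M') \leq \cost(M)$ while $|M'| = |M| - 1$, contradicting the minimum-pair choice of $M$.

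For the remaining case, in which no path of $M$ starts at $v$, I would argue analogously by \emph{extending} rather than merging: pick any child $c$ of $v$ in $T_1$ with incident edge of label $e$, and any child $d$ of $w \coloneqq \pathend(q)$ in $T_2$ with incident edge of label $e'$, then replace $(p,q)$ by $(p \cdot c,\ q \cdot d)$. Since no mapped path starts at $v$ or $w$, the edges $e$ and $e'$ were previously charged as a full deletion and insertion; after the extension they contribute $|\ell_1(p) + e - \ell_2(q) - e'|$, which the triangle inequality bounds by $|\ell_1(p) - \ell_2(q)| + e + e'$. Iterating drives $\pathend(p)$ down to a leaf of $T_1$, and the symmetric argument handles $T_2$. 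The boundary subcase in which one side exhausts its subtree first is closed by dropping $(p,q)$ altogether, charging a full deletion of $p$ plus a full insertion of $q$; the minimum-pair choice of $M$ then rules this configuration out.

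The main obstacle I anticipate is the verification of condition~2 for $M'$ in the principal case: the careful combination of monotonicity of root-to-leaf paths with Lemma~\ref{lemma:simplePathProperty} is what rules out a third mapped path that could ``wrap through'' both $p$ and $p'$ at non-endpoint vertices after the merge. Once that geometric observation is secured, the triangle-inequality cost estimate together with the minimum-pair normalisation of $M$ immediately yields the desired contradiction in both cases, and the symmetric claim for $p \in \mathcal{P}(T_2)$ follows by swapping the roles of $T_1$ and $T_2$.
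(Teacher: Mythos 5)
Your Case~1 (exactly one mapped path starts at $v$) is essentially sound: the merge $(p,q),(p',q')\mapsto(p\cdot p',q\cdot q')$ preserves conditions 1--3, the triangle inequality gives $\cost(M')\leq\cost(M)$, and the pair count drops; this is the merging idea the paper itself gestures at in the caption of Figure~\ref{fig:pathLemma}(c). Two remarks, though. The condition-2 check is simpler than you make it: if a path $r$ meets $p\cdot p'$ in two vertices it must contain the whole segment between them, which already forces $|r\cap p|\geq 2$ or $|r\cap p'|\geq 2$; Lemma~\ref{lemma:simplePathProperty} is not needed. More importantly, the minimum-pair normalisation proves the claim only for pair-minimal optimal mappings, whereas the lemma quantifies over all optimal mappings. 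For Case~1 you can dispense with the normalisation: since $v$ and $w=\pathend(q)$ each have degree at least two, each has a second child whose subtree is entirely deleted (resp.\ inserted), and adding the pair $(vc_2,wd_2)$ \emph{strictly} lowers the cost because labels are positive --- a direct contradiction with optimality.

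The genuine gap is the boundary subcase of Case~2. When no mapped path starts at $v$ and $w=\pathend(q)$ is a leaf, the two-sided extension is unavailable, and your repair --- dropping $(p,q)$ in favour of a full deletion plus insertion --- costs $\ell_1(p)+\ell_2(q)\geq|\ell_1(p)-\ell_2(q)|$, so the new mapping is never cheaper and is in general strictly more expensive; neither optimality nor the minimum-pair tie-break produces a contradiction. Worse, this configuration is realisable by an optimal, pair-minimal mapping: let $T_1$ have a root edge of length $5$ ending in a saddle with two unit leaf edges, and let $T_2$ be a single edge of length $1$; mapping root edge to root edge and deleting both leaf edges costs $4+1+1=6$, which meets the lower bound $|\mathrm{pers}(T_1)-\mathrm{pers}(T_2)|=6$, yet no mapped path starts at the saddle. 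So the conclusion fails for \emph{this} optimal mapping, and no argument can close the subcase as you have set it up. What can be salvaged is the one-sided extension $(p,q)\mapsto(p\cdot c,q)$, which by the triangle inequality never increases cost and drives $\pathend(p)$ down to a leaf, showing that \emph{some} optimal mapping satisfies the conclusion (with the implicit proviso, also absent from your write-up, that mapped paths ending in leaves are exempt). To make the argument rigorous you must either read the lemma existentially or impose a normalisation strong enough to exclude the example above before running the contradiction.
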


\begin{proof}
See supplementary material, App.~B.
\end{proof}




The properties from the last two lemmas are illustrated Figure~\ref{fig:pathLemma}. As a next step, we now focus on the equivalence of path mappings and the here defined one-degree edit distance for abstract merge trees. We show the equivalence by first proving that an optimal edit sequence has a corresponding mapping of lower or equal cost and secondly that each mapping has a corresponding edit sequence. This then allows us to compute optimal path mappings instead of optimal edit sequence.

\begin{lemma}
\label{lemma:editToMapping}

Let $S$ be a cost-optimal sequence of edit operations that transforms an abstract merge tree $T_1$ into another one $T_2$. Then there exists a path mapping $M \subseteq \mathcal{P}(T_1) \times \mathcal{P}(T_2)$ such that $\cost(M) \leq \cost(S)$.

\end{lemma}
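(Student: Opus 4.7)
The plan is to prove the claim by induction on the length $n$ of the edit sequence $S$. In the base case $n = 0$ we have $T_1 = T_2$, and the identity mapping $M = \{(e,e) : e \in E(T_1)\}$ (where each edge $e = (c,p)$ is treated as the length-two path $pc$) is a valid path mapping with $\cost(M) = 0 = \cost(S)$. For the inductive step, I would write $S = s \cdot S'$, let $T_1'$ denote the tree obtained by applying $s$ to $T_1$, and apply the inductive hypothesis to the sub-sequence $S'$ (which takes $T_1'$ to $T_2$) to obtain a path mapping $M'$ between $T_1'$ and $T_2$ with $\cost(M') \le \cost(S')$. The main task is then to construct a path mapping $M$ between $T_1$ and $T_2$ from $M'$ with $\cost(M) \le \cost(M') + \cost(s)$, which combined with the inductive bound yields $\cost(M) \le \cost(S)$.

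The construction proceeds by case analysis on the type of $s$. For a relabel, $T_1 = T_1'$ structurally, so we set $M = M'$ and bound the change in cost via the triangle inequality, applied either to the relabel term of the path containing the modified edge (if it is mapped) or to the corresponding deletion/insertion term of that edge otherwise. For a deletion of a leaf edge $(c,p)$, the tree $T_1$ has one extra leaf vertex compared to $T_1'$ and, in the $k = 1$ sub-case, one extra internal node; I would \emph{lift} $M'$ to $M$ by keeping all path pairs and re-introducing the additional vertex inside any path that traversed the merged edge of $T_1'$. The conditions of Definition~\ref{definiton:path_mappings} are preserved under this lifting, since paths gain only an internal vertex (which affects neither disjointness nor continuity), and the total label of each lifted path equals that of its pre-image by the label-merging rule; the only new cost contribution is the uncovered leaf edge $(c,p)$, so $\cost(M) = \cost(M') + \cost(s)$.

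The insertion case is the main technical obstacle and is dual to the deletion case. Here $T_1'$ has one extra leaf $c$ compared to $T_1$, attached by the new edge $(c,p)$ with label $v = \cost(s)$, and possibly one extra internal node $p$ inserted inside a pre-existing edge of $T_1$. I would \emph{project} $M'$ to $M$ by (i) collapsing any path of $T_1'$ that passes through the new internal node back onto the corresponding pre-split edge of $T_1$ (the label is preserved by the splitting constraint $\ell((c_{1-i},p)) + \ell((p,p')) = \ell_1((c_{1-i},p'))$), and (ii) shortening any path ending at $c$ by dropping its last edge; if the resulting path is valid in $T_1$ it is kept in $M$, otherwise the pair is dropped and the paired $T_2$-path becomes uncovered insertions of $M$. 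For shortened pairs the change in the relabel term is bounded by $v$ via the triangle inequality; for dropped pairs, exchanging a relabel term $|v - \ell_2(q)|$ for an insertion term $\ell_2(q)$ adds at most $v = \cost(s)$ by the inequality $\ell_2(q) \le v + |v - \ell_2(q)|$. Verifying the path-mapping conditions for the projected $M$ is straightforward from the natural vertex correspondence between $T_1$ and $T_1'$, and summing the individual cost contributions across all pairs of $M'$ yields $\cost(M) \le \cost(M') + \cost(s)$, closing the induction.
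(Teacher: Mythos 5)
Your overall strategy---induction on the length of $S$, peeling off one operation and transforming the mapping obtained for the tail---is the natural one and matches the transitive-composition picture the paper sketches in Figure~\ref{fig:pathMappingIllustration}. The base case, the relabel case and the deletion case are correct as you describe them (for the deletion, note additionally that at most one lifted path can contain the reinserted degree-two node, by condition~2 of Definition~\ref{definiton:path_mappings}, so disjointness survives the lift). Two smaller points: the induction must be run for \emph{arbitrary} edit sequences, since the tail $S'$ need not be cost-optimal; and you implicitly restrict to one-degree sequences by only contracting leaf edges, which is what Theorem~\ref{theorem:editToMapping} actually needs.

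The genuine gap is in the insertion case, exactly where you locate the difficulty. When $s$ inserts a new leaf $c$ together with a new internal node $p$ splitting an edge $(x,p')$ of $T_1$, the mapping $M'$ supplied by the induction hypothesis may contain a mapped path $a$ that \emph{ends} at $p$ (covering only the upper half $(p,p')$) and a different mapped path $b$ that \emph{starts} at $p$ and descends through $x$ (covering the lower half $(x,p)$); by Lemma~\ref{lemma:optPathMappingsAreAbstractMergeTrees} this is precisely the expected shape of a good mapping around a branching node, so it cannot be wished away. Neither $a$ nor $b$ ``passes through'' $p$, so your rule~(i) does not apply to them, and after $p$ is removed neither is a path of $T_1$. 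Your fallback of dropping invalid pairs destroys the cost bound: replacing the relabel terms $|\ell'(a)-\ell_2(q_a)|$ and $|\ell'(b)-\ell_2(q_b)|$ by deletion and insertion terms for all their edges and partners increases the cost by an amount governed by $\ell'(a),\ell'(b),\ell_2(q_a),\ell_2(q_b)$, not by $\cost(s)=v$. The repair is a third projection rule: condition~2 forces $a$ to be the unique mapped path ending at $p$ (two such paths would share both $p$ and $p'$), so condition~3 applied to $b$ yields $\pathstart(q_b)=\pathend(q_a)$; hence $(a\cdot b,\ q_a\cdot q_b)$ is a legitimate pair over $T_1\times T_2$, its labels add up by the label-splitting constraint, and its relabel cost is at most the sum of the two old ones by the triangle inequality. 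One must then re-verify conditions~1--3 for the merged pair and separately drop the mapped path descending from $p$ into $c$, paying at most $v$. Without this merging step the induction does not close.
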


\begin{proof}
See supplementary material, App.~C.
\end{proof}

\begin{figure*}[]
\begin{subfigure}[t]{0.24\linewidth}
    \centering
    \resizebox{0.99\linewidth}{!}{
    \begin{tikzpicture}[xscale=0.6,yscale=0.8]
    
    \node[] at (0, 0) (p1) {\tiny$p_1$};
    \node[] at (0, -1) (n1) {\tiny$n_1$};
    \node[] at (-1,-2) (c11) {\tiny$c_{1,1}$};
    \node[] at (1, -2) (c12) {\tiny$c_{1,2}$};
    \draw [ultra thick,ForestGreen] (p1) -- (n1);
    \draw [ultra thick,ProcessBlue] (n1) -- (c11);
    \draw [ultra thick,RedOrange] (n1) -- (c12);
    \draw [very thick,black] (c11) -- (-1.5,-2.4) -- (-1.75,-2.75) -- (-1.25,-2.75) -- (-1.5,-2.4);
    \draw [very thick,black] (c11) -- (-0.5,-2.4) -- (-0.75,-2.75) -- (-0.25,-2.75) -- (-0.5,-2.4);
    \draw [very thick,black] (c12) -- (1.5,-2.4) -- (1.75,-2.75) -- (1.25,-2.75) -- (1.5,-2.4);
    \draw [very thick,black] (c12) -- (0.5,-2.4) -- (0.75,-2.75) -- (0.25,-2.75) -- (0.5,-2.4);
    
    \node[] at (0+4, 0) (p2) {\tiny$p_2$};
    \node[] at (0+4, -1) (n2) {\tiny$n_2$};
    \node[] at (-1+4, -2) (c21) {\tiny$c_{2,1}$};
    \node[] at (1+4, -2) (c22) {\tiny$c_{2,2}$};
    \draw [ultra thick,ForestGreen] (p2) -- (n2);
    \draw [ultra thick,ProcessBlue] (n2) -- (c21);
    \draw [ultra thick,RedOrange] (n2) -- (c22);
    \draw [very thick,black] (c21) -- (-1.5+4,-2.4) -- (-1.75+4,-2.75) -- (-1.25+4,-2.75) -- (-1.5+4,-2.4);
    \draw [very thick,black] (c21) -- (-0.5+4,-2.4) -- (-0.75+4,-2.75) -- (-0.25+4,-2.75) -- (-0.5+4,-2.4);
    \draw [very thick,black] (c22) -- (1.5+4,-2.4) -- (1.75+4,-2.75) -- (1.25+4,-2.75) -- (1.5+4,-2.4);
    \draw [very thick,black] (c22) -- (0.5+4,-2.4) -- (0.75+4,-2.75) -- (0.25+4,-2.75) -- (0.5+4,-2.4);
    
    \node[] at (2,-0.3) (label) {\textbf{(a)}};
    
    \end{tikzpicture}
    }
    \caption*{\centering\mbox{$d_P(c_{1,1},p_1,c_{2,1},p_2)+d_P(c_{1,2},n_1,c_{2,2},n_2)$} $+\cost(p_1...n_1,p_2...n_2)$}
\end{subfigure}
\begin{subfigure}[t]{0.24\linewidth}
    \centering
    \resizebox{0.99\linewidth}{!}{
    \begin{tikzpicture}[xscale=0.6,yscale=0.8]
    
    \node[] at (0, 0) (p1) {\tiny$p_1$};
    \node[] at (0, -1) (n1) {\tiny$n_1$};
    \node[] at (-1,-2) (c11) {\tiny$c_{1,1}$};
    \node[] at (1, -2) (c12) {\tiny$c_{1,2}$};
    \draw [ultra thick,ForestGreen] (p1) -- (n1);
    \draw [ultra thick,ProcessBlue] (n1) -- (c11);
    \draw [ultra thick,RedOrange] (n1) -- (c12);
    \draw [very thick,black] (c11) -- (-1.5,-2.4) -- (-1.75,-2.75) -- (-1.25,-2.75) -- (-1.5,-2.4);
    \draw [very thick,black] (c11) -- (-0.5,-2.4) -- (-0.75,-2.75) -- (-0.25,-2.75) -- (-0.5,-2.4);
    \draw [very thick,black] (c12) -- (1.5,-2.4) -- (1.75,-2.75) -- (1.25,-2.75) -- (1.5,-2.4);
    \draw [very thick,black] (c12) -- (0.5,-2.4) -- (0.75,-2.75) -- (0.25,-2.75) -- (0.5,-2.4);
    
    \node[] at (0+4, 0) (p2) {\tiny$p_2$};
    \node[] at (0+4, -1) (n2) {\tiny$n_2$};
    \node[] at (-1+4, -2) (c21) {\tiny$c_{2,1}$};
    \node[] at (1+4, -2) (c22) {\tiny$c_{2,2}$};
    \draw [ultra thick,ForestGreen] (p2) -- (n2);
    \draw [ultra thick,RedOrange] (n2) -- (c21);
    \draw [ultra thick,ProcessBlue] (n2) -- (c22);
    \draw [very thick,black] (c21) -- (-1.5+4,-2.4) -- (-1.75+4,-2.75) -- (-1.25+4,-2.75) -- (-1.5+4,-2.4);
    \draw [very thick,black] (c21) -- (-0.5+4,-2.4) -- (-0.75+4,-2.75) -- (-0.25+4,-2.75) -- (-0.5+4,-2.4);
    \draw [very thick,black] (c22) -- (1.5+4,-2.4) -- (1.75+4,-2.75) -- (1.25+4,-2.75) -- (1.5+4,-2.4);
    \draw [very thick,black] (c22) -- (0.5+4,-2.4) -- (0.75+4,-2.75) -- (0.25+4,-2.75) -- (0.5+4,-2.4);
    
    \node[] at (2,-0.3) (label) {\textbf{(b)}};
    
    \end{tikzpicture}
    }
    \caption*{\centering\mbox{$d_P(c_{1,1},p_1,c_{2,2},p_2)+d_P(c_{1,2},n_1,c_{2,1},n_2)$} $+\cost(p_1...n_1,p_2...n_2)$}
\end{subfigure}
\begin{subfigure}[t]{0.24\linewidth}
    \centering
    \resizebox{0.99\linewidth}{!}{
    \begin{tikzpicture}[xscale=0.6,yscale=0.8]
    
    \node[] at (0, 0) (p1) {\tiny$p_1$};
    \node[] at (0, -1) (n1) {\tiny$n_1$};
    \node[] at (-1,-2) (c11) {\tiny$c_{1,1}$};
    \node[] at (1, -2) (c12) {\tiny$c_{1,2}$};
    \draw [ultra thick,ForestGreen] (p1) -- (n1);
    \draw [ultra thick,ForestGreen] (n1) -- (c11);
    \draw [ultra thick,gray!40] (n1) -- (c12);
    \draw [very thick,black] (c11) -- (-1.5,-2.4) -- (-1.75,-2.75) -- (-1.25,-2.75) -- (-1.5,-2.4);
    \draw [very thick,black] (c11) -- (-0.5,-2.4) -- (-0.75,-2.75) -- (-0.25,-2.75) -- (-0.5,-2.4);
    \draw [very thick,gray!40] (c12) -- (1.5,-2.4) -- (1.75,-2.75) -- (1.25,-2.75) -- (1.5,-2.4);
    \draw [very thick,gray!40] (c12) -- (0.5,-2.4) -- (0.75,-2.75) -- (0.25,-2.75) -- (0.5,-2.4);
    
    \node[] at (0+4, 0) (p2) {\tiny$p_2$};
    \node[] at (0+4, -1) (n2) {\tiny$n_2$};
    \node[] at (-1+4, -2) (c21) {\tiny$c_{2,1}$};
    \node[] at (1+4, -2) (c22) {\tiny$c_{2,2}$};
    \draw [ultra thick,ForestGreen] (p2) -- (n2);
    \draw [ultra thick,black] (n2) -- (c21);
    \draw [ultra thick,black] (n2) -- (c22);
    \draw [very thick,black] (c21) -- (-1.5+4,-2.4) -- (-1.75+4,-2.75) -- (-1.25+4,-2.75) -- (-1.5+4,-2.4);
    \draw [very thick,black] (c21) -- (-0.5+4,-2.4) -- (-0.75+4,-2.75) -- (-0.25+4,-2.75) -- (-0.5+4,-2.4);
    \draw [very thick,black] (c22) -- (1.5+4,-2.4) -- (1.75+4,-2.75) -- (1.25+4,-2.75) -- (1.5+4,-2.4);
    \draw [very thick,black] (c22) -- (0.5+4,-2.4) -- (0.75+4,-2.75) -- (0.25+4,-2.75) -- (0.5+4,-2.4);
    
    \node[] at (2,-0.3) (label) {\textbf{(c)}};
    
    \end{tikzpicture}
    }
    \caption*{\mbox{$d_P(c_{1,1},p_1,n_1,p_2)+d_P(c_{1,2},n_1,\bot,\bot)$}}
\end{subfigure}
\begin{subfigure}[t]{0.24\linewidth}
    \centering
    \resizebox{0.99\linewidth}{!}{
    \begin{tikzpicture}[xscale=0.6,yscale=0.8]
    
    \node[] at (0, 0) (p1) {\tiny$p_1$};
    \node[] at (0, -1) (n1) {\tiny$n_1$};
    \node[] at (-1,-2) (c11) {\tiny$c_{1,1}$};
    \node[] at (1, -2) (c12) {\tiny$c_{1,2}$};
    \draw [ultra thick,ForestGreen] (p1) -- (n1);
    \draw [ultra thick,gray!40] (n1) -- (c11);
    \draw [ultra thick,ForestGreen] (n1) -- (c12);
    \draw [very thick,gray!40] (c11) -- (-1.5,-2.4) -- (-1.75,-2.75) -- (-1.25,-2.75) -- (-1.5,-2.4);
    \draw [very thick,gray!40] (c11) -- (-0.5,-2.4) -- (-0.75,-2.75) -- (-0.25,-2.75) -- (-0.5,-2.4);
    \draw [very thick,black] (c12) -- (1.5,-2.4) -- (1.75,-2.75) -- (1.25,-2.75) -- (1.5,-2.4);
    \draw [very thick,black] (c12) -- (0.5,-2.4) -- (0.75,-2.75) -- (0.25,-2.75) -- (0.5,-2.4);
    
    \node[] at (0+4, 0) (p2) {\tiny$p_2$};
    \node[] at (0+4, -1) (n2) {\tiny$n_2$};
    \node[] at (-1+4, -2) (c21) {\tiny$c_{2,1}$};
    \node[] at (1+4, -2) (c22) {\tiny$c_{2,2}$};
    \draw [ultra thick,ForestGreen] (p2) -- (n2);
    \draw [ultra thick,black] (n2) -- (c21);
    \draw [ultra thick,black] (n2) -- (c22);
    \draw [very thick,black] (c21) -- (-1.5+4,-2.4) -- (-1.75+4,-2.75) -- (-1.25+4,-2.75) -- (-1.5+4,-2.4);
    \draw [very thick,black] (c21) -- (-0.5+4,-2.4) -- (-0.75+4,-2.75) -- (-0.25+4,-2.75) -- (-0.5+4,-2.4);
    \draw [very thick,black] (c22) -- (1.5+4,-2.4) -- (1.75+4,-2.75) -- (1.25+4,-2.75) -- (1.5+4,-2.4);
    \draw [very thick,black] (c22) -- (0.5+4,-2.4) -- (0.75+4,-2.75) -- (0.25+4,-2.75) -- (0.5+4,-2.4);
    
    \node[] at (2,-0.3) (label) {\textbf{(d)}};
    
    \end{tikzpicture}
    }
    \caption*{\mbox{$d_P(c_{1,2},p_1,n_1,p_2)+d_P(c_{1,1},n_1,\bot,\bot)$}}
\end{subfigure}
\caption{Exemplary illustration of recursive structure of the optimal path mapping of the trees rooted in $(n_1,p_1)$ and $(n_2,p_2)$, denoted by $d_P(n_1,p_1,n_2,p_2)$. In (a) and (b) we stop the currently tracked path (green) and start to track two new paths (blue and red) in both trees. Which of these are mapped is shown by the colors. In (c) we continue the current path in $T_1$ with the left child and delete the right subtree, while staying at the same node in $T_2$. We do the same recursion with left and right child swapped in (d).}
\label{fig:recursion_dp}
\end{figure*}
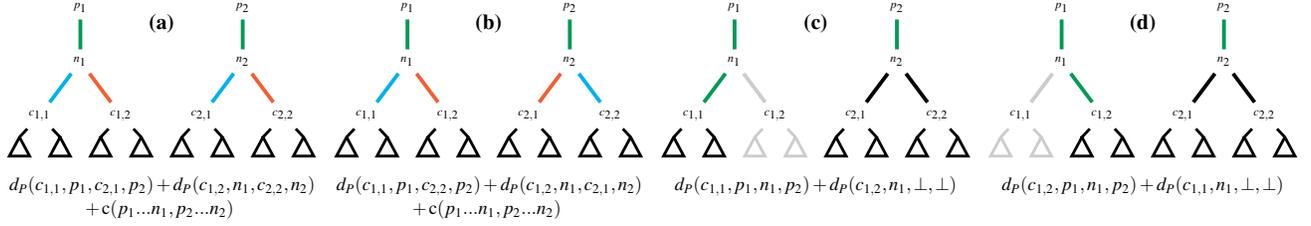

\begin{lemma}
\label{lemma:mappingToEdit}

For two abstract merge trees $T_1,T_2$, let $M \subseteq \mathcal{P}(T_1) \times \mathcal{P}(T_2)$ be a path mapping. Then there exists a sequence $S$ of edit operations that transforms $T_1$ into $T_2$, with $\cost(S) = \cost(M)$.

\end{lemma}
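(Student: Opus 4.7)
My plan is to construct the desired edit sequence $S$ in three successive phases, each contributing exactly its portion of $\cost(M)$: first all relabels on mapped paths, then all deletions of edges of $T_1$ lying off any mapped path, and finally all insertions of the off-path edges of $T_2$.

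For the relabel phase, I would use that by condition~2 of Definition~\ref{definiton:path_mappings} distinct mapped paths in $T_1$ share no edges, so each path can be rescaled independently. For every $(p_1,p_2) \in M$ I would multiply each edge length of $p_1$ by the positive factor $\ell_2(p_2)/\ell_1(p_1)$, producing new edge lengths summing to $\ell_2(p_2)$. Because all edges of $p_1$ change in the same direction, the relabel costs on $p_1$ sum to $|\ell_1(p_1)-\ell_2(p_2)|$, and summing over $M$ yields exactly $\cost(\mappingrelabel(M))$.

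For the deletion phase, I would remove every non-mapped-path edge of $T_1$ by leaf-edge contractions in a bottom-up order. Whenever the last off-path subtree attached to an internal vertex $v$ of some mapped path is removed, $v$ becomes degree-two and is contracted, which merges its two incident mapped-path edges into a single edge of combined length. The total cost of this phase will equal $\cost(\mappingdelete(M))$ by an accounting argument based on the invariant that at every moment the length of each current edge equals the sum of the original lengths of the edges that have been merged into it; consequently each original non-mapped-path edge contributes its length to the total cost exactly once, either when it is itself deleted or through the merged edge into which it has been absorbed. What remains afterwards is a ``collapsed skeleton'' of the mapping in $T_1$ whose surviving edges carry the correct combined $\ell_2$-lengths.

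For the insertion phase, I would first observe that the branching structure of the mapping is identical in $T_1$ and $T_2$: conditions~2 and~3 of Definition~\ref{definiton:path_mappings} together force at most one mapped path to end at any given vertex and establish a bijection between the outgoing mapped paths at matched endpoints. Hence the skeleton obtained after the deletion phase is isomorphic to the analogous collapsed skeleton of $T_2$ with matching edge lengths, and all that remains is to materialize the missing vertices and attach the off-path subtrees of $T_2$. I would process off-path edges top-down using inverse edge contractions: if the attachment vertex is already present in the current tree, I would insert the edge as a new leaf; if the attachment vertex is still hidden inside a collapsed edge, I would use the inverse of the two-children contraction rule, which splits the collapsed edge at the correct position (materializing the needed vertex) while simultaneously attaching the new leaf edge. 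The split preserves total edge length at no cost, while each new leaf edge is charged exactly $\ell_2(e)$, giving a total of $\cost(\mappinginsert(M))$; summing the three phases gives $\cost(S)=\cost(M)$. The hardest part will be the bookkeeping in the deletion phase, where the degree-two contractions merge mapped-path edges: one has to verify that this rearrangement neither loses nor double-counts any original edge length and leaves behind a skeleton with correct combined lengths so that the insertion phase can faithfully reconstruct $T_2$.
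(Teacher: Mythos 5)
Your three-phase plan (rescale the mapped paths, delete the off-path edges of $T_1$ bottom-up, insert the off-path edges of $T_2$ top-down) is the natural construction here, and the rescaling trick in the relabel phase correctly yields exactly $|\ell_1(p_1)-\ell_2(p_2)|$ per mapped pair. However, there is a genuine gap precisely at the point you flag as ``the hardest part'': your accounting invariant for the deletion phase is false for path mappings in which some mapped path $p_1$ ends at an \emph{inner} node $w=\pathend(p_1)$ of $T_1$ from which no further mapped path departs. Definition~\ref{definiton:path_mappings} permits this (condition~3 constrains only where paths \emph{start}); it is ruled out only for \emph{optimal} mappings by Lemma~\ref{lemma:optPathMappingsAreAbstractMergeTrees}, whereas the statement is for arbitrary path mappings. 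In that configuration all children of $w$ are roots of off-path subtrees, so when your bottom-up deletion reduces $w$ to two children and contracts one of them, the $k=1$ rule suppresses $w$ and merges the \emph{last edge of the mapped path} $p_1$ into the root edge of the remaining off-path subtree. Deleting that subtree then charges $\ell(e_w)$ a second time (it was already paid for in the relabel phase) and removes length that the skeleton needs, so the phase cost exceeds $\cost(\mappingdelete(M))$ and the surviving tree is not the claimed skeleton. The symmetric problem occurs in your insertion phase: if $\pathend(p_2)$ is an inner node of $T_2$ with only off-path children, the corresponding skeleton vertex is a \emph{leaf} of the intermediate tree, and ``inserting the edge as a new leaf'' there is not a legal operation --- it would create a degree-one inner node, and it is not the inverse of any defined contraction (the $k=1$ inverse contraction only attaches a leaf to a newly created degree-two vertex in the interior of an existing edge).

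The total cost can still be made to come out right, but not with your clean phase structure: in the degenerate case the final off-path edge at $w$ must be removed by a \emph{relabel} of the merged edge (shrinking it back to the mapped path's intended length) rather than by a contraction, and dually the first off-path child at $\pathend(p_2)$ must be attached by splitting the incoming skeleton edge at an interior point and then repairing the two sub-lengths with additional relabels. Both fixes preserve the totals because the extra relabel costs telescope against the lengths that were absorbed or borrowed, but this needs to be stated and verified --- as written, your invariant (``each original non-mapped-path edge contributes its length exactly once'') is simply violated, since a mapped-path edge gets absorbed into a deleted edge. Alternatively, you could prove the lemma only for mappings satisfying the branching property of Lemma~\ref{lemma:optPathMappingsAreAbstractMergeTrees}, which suffices for Theorem~\ref{theorem:editToMapping}, but then you must say so and adjust the statement you are proving.
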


\begin{proof}
See supplementary material, App.~D.
\end{proof}

Now, we can conclude this part of the paper with the core theorem concerning path mappings. From here on, we will use the term \emph{path mapping distance} equivalently to $\odedist$.

\begin{theorem}
\label{theorem:editToMapping}

For two abstract merge trees $T_1,T_2$, the one-degree edit distance is exactly the cost of an optimal path mapping between $T_1$ and $T_2$:
$$ \odedist(T_1,T_2) = \min \{ \cost(M) \mid M \text{ is a path mapping between } T_1 \text{ and } T_2 \}. $$

\end{theorem}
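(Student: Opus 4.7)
The plan is to deduce the theorem directly from Lemmas~\ref{lemma:editToMapping} and~\ref{lemma:mappingToEdit}, which together supply the two inequalities whose conjunction is the claimed equality. Write $d^\star(T_1,T_2)$ for the minimum path-mapping cost on the right-hand side of the equation.

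For the inequality $d^\star(T_1,T_2) \leq \odedist(T_1,T_2)$, I would fix a one-degree edit sequence $S^\star$ realizing $\odedist(T_1,T_2)$; such a minimizer exists since there always is at least one transforming sequence (e.g.\ delete $T_1$ leaf-by-leaf down to a single edge, then insert the edges of $T_2$) and the minimum is attained. Applying Lemma~\ref{lemma:editToMapping} to $S^\star$ yields a path mapping $M$ with $\cost(M) \leq \cost(S^\star) = \odedist(T_1,T_2)$, and taking the minimum over all mappings gives $d^\star(T_1,T_2) \leq \odedist(T_1,T_2)$.

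For the reverse inequality $\odedist(T_1,T_2) \leq d^\star(T_1,T_2)$, I would start from an optimal path mapping $M^\star$ and invoke Lemma~\ref{lemma:mappingToEdit} to obtain an edit sequence $S$ transforming $T_1$ into $T_2$ with $\cost(S) = \cost(M^\star)$. The single subtle point here is that $\odedist$ minimizes only over \emph{one-degree} sequences, so I would verify that the sequence produced by the construction in Lemma~\ref{lemma:mappingToEdit} is in fact one-degree. This should follow from how the sequence is built from the mapping: subtrees not contained in $M^\star$ can be contracted leaf-by-leaf, an edge mapped to a path of length $k$ in $T_1$ corresponds to $k-1$ additional leaf-incident contractions (symmetrically, $k-1$ leaf-incident insertions on the $T_2$ side), and relabels never violate the one-degree condition. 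Granting that construction property, $\odedist(T_1,T_2) \leq \cost(S) = \cost(M^\star) = d^\star(T_1,T_2)$.

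The main conceptual difficulty therefore lies entirely in the two preceding lemmas rather than in the theorem itself: Lemma~\ref{lemma:editToMapping} requires showing that any bookkeeping an edit sequence performs can be compressed into a path mapping of no greater cost, and Lemma~\ref{lemma:mappingToEdit} requires exhibiting an explicit schedule of one-degree edit operations induced by a mapping. With both directions in hand, the present theorem is a short synthesis; the only verification step I would make sure to write out carefully is that the sequence of Lemma~\ref{lemma:mappingToEdit} is one-degree, so that the right-hand minimum can actually be compared against $\odedist$ rather than against the unrestricted $\edist$.
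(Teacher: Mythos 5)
Your proposal is correct and follows essentially the same route as the paper, which likewise derives the theorem directly from Lemmas~\ref{lemma:editToMapping} and~\ref{lemma:mappingToEdit} by combining the two inequalities. Your extra observation---that one must check the sequence produced by Lemma~\ref{lemma:mappingToEdit} is in fact one-degree so that its cost bounds $\odedist$ rather than only $\edist$---is a legitimate point of care that the paper leaves implicit.
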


\begin{proof}
Follows directly from Lemmas~\ref{lemma:editToMapping} and~\ref{lemma:mappingToEdit}.
\end{proof}

\subsection{Recursive Structure}

We now know that the one-degree edit distance is equivalent to optimal path mappings. Next, we will investigate the recursive structure of path mappings, which can then be exploited to obtain efficient polynomial time algorithms.
We omit a formal discussion of the base cases since the optimal path mapping between a non-empty abstract merge tree and an empty one is of course the empty mapping, and for two trees with just one edge, we always map the two unique edges.
Furthermore, we only consider binary merge trees for simplicity. The recursion can of course be easily adapted for trees of arbitrary degree.

\begin{lemma} Given two abstract merge trees $T_1,T_2$ with $\troot(T_1)=v_1$ and $\troot(T_2) = u_1$, let $v_2,u_2$ be the unique children of the two roots and let those have children $v_3,v_4$ and $u_3,u_4$. Let $T_1' = T_1[(v_2,v_3)]$, $T_1'' = T_1[(v_2,v_4)]$, $T_2' = T_2[(u_2,u_3)]$ and $T_2'' = T_2[(u_2,u_4)]$. Then, for the one-degree edit distance between $T_1$ and $T_2$, it holds that $\odedist(T_1,T_2)$ is either
\begin{itemize}
    \item $\odedist(T_1',\bot) + \odedist(T_1-T_1',T_2)$ or
    \item $\odedist(\bot,T_2') + \odedist(T_1,T_2-T_2')$ or
    \item $\odedist(T_1'',\bot) + \odedist(T_1-T_1'',T_2)$ or
    \item $\odedist(\bot,T_2'') + \odedist(T_1,T_2-T_2'')$ or
    \item $\odedist(T_1',T_2') + \odedist(T_1'',T_2'') + \cost(\ell_1((v_2,v_1)),\ell_2((u_2,u_1)))$ or
    \item $\odedist(T_1',T_2'') + \odedist(T_1'',T_2') + \cost(\ell_1((v_2,v_1)),\ell_2((u_2,u_1)))$.
\end{itemize}
\label{lemma:rec_inner}
\end{lemma}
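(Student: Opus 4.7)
The plan is to use Theorem~\ref{theorem:editToMapping} to translate the claim into a statement about optimal path mappings, and then to establish the two inequalities $\odedist(T_1,T_2) \le \min(\cdot)$ and $\odedist(T_1,T_2) \ge \min(\cdot)$ separately.

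For the upper bound I would exhibit, for each of the six expressions, a path mapping or edit sequence achieving that cost. Cases~5 and~6 are witnessed by $M = M' \cup M'' \cup \{(v_1 v_2,\, u_1 u_2)\}$ where $M'$ and $M''$ are optimal path mappings on the two subtree pairs; the three conditions of Definition~\ref{definiton:path_mappings} are straightforward to check because the involved vertex sets are disjoint except at $v_2,u_2$, and condition~3 is satisfied since the root-edge pair serves as predecessor for the paths in $M',M''$ starting at $v_2,u_2$. The cost of $M$ equals exactly expression~5 (or~6) by the definition of $\cost(M)$. Cases~1--4 are witnessed by an edit sequence that first removes the skipped subtree via one-degree contractions (or inverse contractions) and then applies an optimal sequence for the remaining trees; the prefix costs $\odedist(T_1',\bot)$ or the analogous term.

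For the lower bound I would start from a cost-optimal path mapping $M^*$ with $\cost(M^*) = \odedist(T_1,T_2)$ and decompose it according to the mapped paths at the root. If $M^*$ is empty, then $\cost(M^*) = \ell_1(T_1) + \ell_2(T_2)$ and the metric property $\odedist(T_1-T_1',T_2) \le \odedist(T_1-T_1',\bot) + \odedist(\bot,T_2)$ shows expression~1 is already bounded by $\cost(M^*)$. Otherwise, condition~3 of Definition~\ref{definiton:path_mappings} produces a unique pair $(P_1,P_2) \in M^*$ with $\pathstart(P_1) = v_1$ and $\pathstart(P_2) = u_1$, and I would case-split on the lengths of $P_1$ and $P_2$. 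If both have length one, so $P_1 = v_1 v_2$ and $P_2 = u_1 u_2$, then Lemma~\ref{lemma:optPathMappingsAreAbstractMergeTrees} forces mapped paths starting at each of $v_3,v_4$ and each of $u_3,u_4$; removing $(P_1,P_2)$ then splits $M^*$ into two independent path mappings on the subtree pairs, yielding expression~5 or~6 depending on how the children are paired. If $P_1$ has length at least two, say $P_1 = v_1 v_2 v_3 \cdots$, then Lemma~\ref{lemma:simplePathProperty} applied to the interior node $v_2$ of $P_1$ forces the entire subtree $T_1''$ to be outside $M^*$, so its edges contribute $\odedist(T_1'',\bot)$ to $\cost(M^*)$ via deletions, and the residual $M^* \setminus (\text{these deletions})$, after replacing $P_1$ by its image in $T_1 - T_1''$, is a path mapping between $T_1 - T_1''$ and $T_2$ whose cost bounds $\odedist(T_1-T_1'',T_2)$ from above, giving expression~3. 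The case where $P_2$ has length at least two is symmetric and yields expression~2 or~4.

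The main obstacle will be the length-mismatch case: showing that when $P_1$ extends past the root edge into a subtree, the pruning $T_1 \mapsto T_1 - T_1''$ produces a residual path mapping whose cost is exactly $\cost(M^*) - \odedist(T_1'',\bot)$. The key bookkeeping is that the label merging rule in the definition of $T_1 - T_1''$ assigns the collapsed edge $(v_3,v_1)$ the label $\ell_1((v_2,v_1)) + \ell_1((v_3,v_2))$, which is exactly $\ell_1(P_1)$, so the relabel cost $|\ell_1(P_1) - \ell_2(P_2)|$ is preserved under the substitution; all other path-mapping conditions are inherited verbatim from $M^*$. Once this decomposition step is in place, the six cases cover every optimal $M^*$ and the equality follows by combining with the upper bound.
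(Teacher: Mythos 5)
Your proposal is correct and follows essentially the same route as the paper: take an optimal path mapping, locate the unique mapped pair of paths starting at the two roots, and split on whether that pair is the pair of root edges (using Lemma~\ref{lemma:optPathMappingsAreAbstractMergeTrees} to get the two relabel cases) or extends deeper (using Lemma~\ref{lemma:simplePathProperty} to peel off a deleted/inserted subtree, with the label-additivity of $T_1-T_1''$ preserving the relabel cost). Your explicit upper-bound direction and the empty-mapping case are added care that the paper leaves implicit, but the decomposition argument is the same.
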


\begin{proof}
To show this recursion, we consider the optimal path mapping $M$ between $T_1$ and $T_2$. We know that $M$ is not empty, since both $T_1$ and $T_2$ are non-empty trees. Furthermore, we know that there is a pair of paths $p_1 \in \mathcal{P}(T_1),p_2 \in \mathcal{P}(T_2)$ in the mapping, $(p_1,p_2) \in M$, such that both begin at the roots of the two trees, i.e.\ $\pathstart(p_1) = v_1$ and $\pathstart(p_2) = u_1$. Then, we can make a distinction between two cases: either $(v_1v_2,u_1u_2) \in M$ \textbf{(a)} or $(v_1v_2,u_1u_2) \notin M$~\textbf{(b)}. Or in other words, either $p_1=v_1v_2,p_2=u_1u_2$ holds or not.

\textbf{(a)} First, we consider the case that $(v_1v_2,u_1u_2) \in M$. By Lemma~\ref{lemma:optPathMappingsAreAbstractMergeTrees}, we know that there are at least two paths $p'_1,p''_1 \in \mathcal{P}(T_1)$ contained in $M$ with $\pathstart(p'_1) = \pathstart(p''_1) = v_2$.
Since $p'_1$ and $p''_1$ go through $v_3$ and $v_4$ and start in $v_2$, we can restrict $M$ to the vertices and edges of $T_1' = T_1[(v_2,v_3)]$ and $T_1'' = T_1[(v_2,v_4)]$ an obtain again two path mappings $M',M''$. To see that $M',M''$ are indeed path mappings, we only have to check condition~3 for $p'_1$ and $p''_1$, since only for those two paths the parent paths are removed. However, as they both start in the roots of the corresponding trees, condition~3 is still fulfilled.
Hence, we get that
$$\odedist(T_1,T_2) = \odedist(T_1',T_2') + \odedist(T_1'',T_2'') + \cost(\ell_1((v_2,v_1)),\ell_2((u_2,u_1)))$$
if $p'_1$ goes through $v_3$ and $M(p'_1)$ through $u_3$, or
$$\odedist(T_1,T_2) = \odedist(T_1',T_2'') + \odedist(T_1'',T_2') + \cost(\ell_1((v_2,v_1)),\ell_2((u_2,u_1)))$$
otherwise, i.e.\ if $p'_1$ goes through $v_3$ and $M(p'_1)$ through $u_4$.

\textbf{(b)} In the second case that $(v_1v_2,u_1u_2) \notin M$, we know that $v_1v_2$ is not contained in $M$ or $u_1u_2$ is not contained in $M$, which equivalently means that there is a path $v_1v_2v'_1...v'_k \in \mathcal{P}(T_1)$ contained in $M$ with $k \geq 1$ or there is a path $u_1u_2u'_1...u'_k \in \mathcal{P}(T_2)$ contained in $M$ with $k \geq 1$.
If $v_1v_2v'_1...v'_k$ is contained in $M$ and $v'_1 = v_3$, then $T_1'' = T_1[(v_2,v_4)]$ is not contained in $M$ according to Lemma~\ref{lemma:simplePathProperty}. Let $M''$ be an optimal path mapping between $T_1''$ and the empty tree and $M'$ be an optimal path mapping between $T_1 - T_1''$ and $T_2$. We have $\mappingdelete(M'') \subseteq \mappingdelete(M)$ since $T_1''$ is not contained in $M$, and $\mappingedit(M') = \mappingedit(M) \setminus \mappingedit(M'')$ since $\ell_1(v_1v_2v'_1...v'_k) = \ell_1(v_1v'_1...v'_k)$. Both together give us
$$\odedist(T_1,T_2) = \odedist(T_1'',\bot) + \odedist(T_1-T_1'',T_2).$$
The other three cases, $v'_1 = v_4$, $u'_1 = u_3$ and $u'_1 = u_4$ all work symmetrically and in total, we get the the six cases from the lemma to show.
\end{proof}

This recursion gives rise to a dynamic programming algorithm which we discuss in the next section.

\SetKwComment{Comment}{/* }{ */}
\setlength{\algomargin}{5pt}
\SetAlgoVlined
\begin{algorithm}[!ht]
\caption{Computing the path mapping distance}
\label{alg:odedist}
\SetKwFunction{pathDist}{$\odedist$}
\SetKwProg{Fn}{Function}{:}{}
\DontPrintSemicolon
\Fn{\pathDist{$n_1,p_1,n_2,p_2$}}{
\If{$n_1=\bot$ and $n_2$ is a leaf}{
  \Return $\cost(\bot,p_2...n_2)$\;
}
\If{$n_2=\bot$ and $n_1$ is a leaf}{
  \Return $\cost(p_1...n_1,\bot)$\;
}
\If{$n_1$ is a leaf and $n_2$ is a leaf}{
  \Return $\cost(p_1...n_1,p_2...n_2)$\;
}
\If{$n_1=\bot$ and $n_2$ is an inner node}{
  Let $c_{2,1},c_{2,2}$ be the children of $n_2$\;
  \Return~$\pathDist(\bot,\bot,c_{2,1},n_2) + \pathDist(\bot,\bot,c_{2,2},n_2) + \cost(\bot,p_2...n_2)$\;
}
\If{$n_2=\bot$ and $n_1$ is an inner node}{
  Let $c_{1,1},c_{1,2}$ be the children of $n_1$\;
  \Return~$\pathDist(c_{1,1},n_1,\bot,\bot) + \pathDist(c_{1,2},n_1,\bot,\bot) + \cost(p_1...n_1,\bot)$\;
}
\If{$n_1$ is a leaf and $n_2$ is an inner node}{
  Let $c_{2,1},c_{2,2}$ be the children of $n_2$\;
  \Return~$\min \begin{cases}
           \pathDist(n_1,p_1,c_{2,1},p_2) + \pathDist(\bot,\bot,c_{2,2},n_2)\\
           \pathDist(n_1,p_1,c_{2,2},p_2) + \pathDist(\bot,\bot,c_{2,1},n_2)
           \end{cases}$\;
  \vspace{-10pt}
}
\If{$n_2$ is a leaf and $n_1$ is an inner node}{
  Let $c_{1,1},c_{1,2}$ be the children of $n_1$\;
  \Return~$\min \begin{cases}
           \pathDist(c_{1,1},p_1,n_2,p_2) + \pathDist(c_{1,2},n_1,\bot,\bot)\\
           \pathDist(c_{1,2},p_1,n_2,p_2) + \pathDist(c_{1,1},n_1,\bot,\bot)
           \end{cases}$\;
  \vspace{-10pt}
}
\If{$n_1$ is an inner node and $n_2$ is an inner node}{
  Let $c_{1,1},c_{1,2}$ be the children of $n_1$\;
  Let $c_{2,1},c_{2,2}$ be the children of $n_2$\;
  $d_1 \coloneqq \min \begin{cases}
           \pathDist(c_{1,1},p_1,n_2,p_2) + \pathDist(c_{1,2},n_1,\bot,\bot)\\
           \pathDist(c_{1,2},p_1,n_2,p_2) + \pathDist(c_{1,1},n_1,\bot,\bot)
           \end{cases}$\;
  $d_2 \coloneqq \min \begin{cases}
           \pathDist(n_1,p_1,c_{2,1},p_2) + \pathDist(\bot,\bot,c_{2,2},n_2)\\
           \pathDist(n_1,p_1,c_{2,2},p_2) + \pathDist(\bot,\bot,c_{2,1},n_2)
           \end{cases}$\;
  \mbox{$d_3 \coloneqq \min \begin{cases}
           \pathDist(c_{1,1},n_1,c_{2,1},n_2) + \pathDist(c_{1,2},n_1,c_{2,2},n_2)\\
           \pathDist(c_{1,1},n_1,c_{2,1},n_2) + \pathDist(c_{1,2},n_1,c_{2,2},n_2)
           \end{cases}$}\;
  \vspace{-10pt}
  \Return~$\min(d_1,d_2,d_3+\cost(p_1...n_1,p_2...n_2))$\;
}
}
\end{algorithm}

\section{Implementation and Experiments}
\label{section:experiments}

\begin{figure*}[t]
\centering
\includegraphics[width=0.25\linewidth]{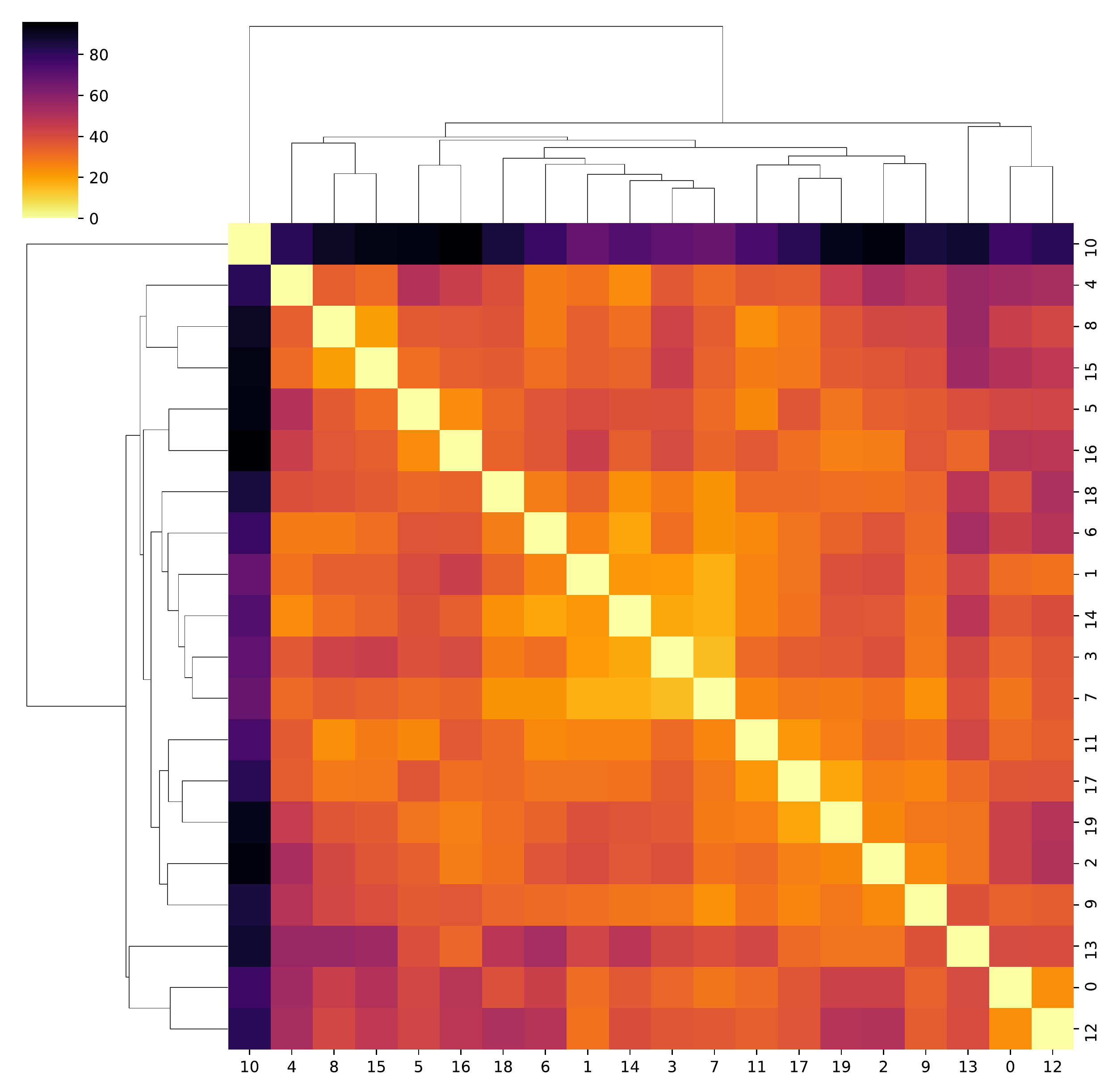}
\hspace{0.05\linewidth}
\includegraphics[width=0.25\linewidth]{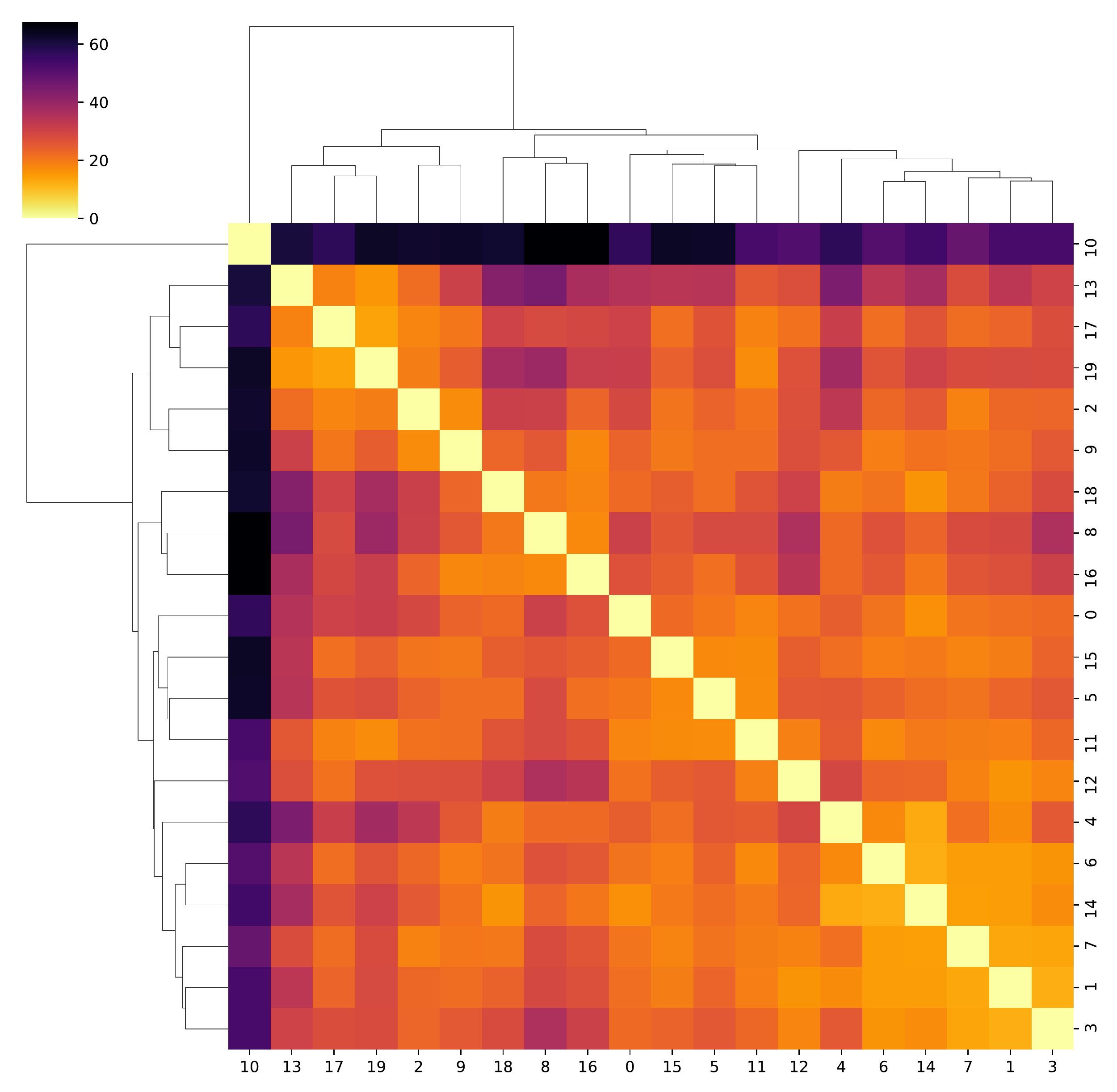}
\hspace{0.05\linewidth}
\includegraphics[width=0.25\linewidth]{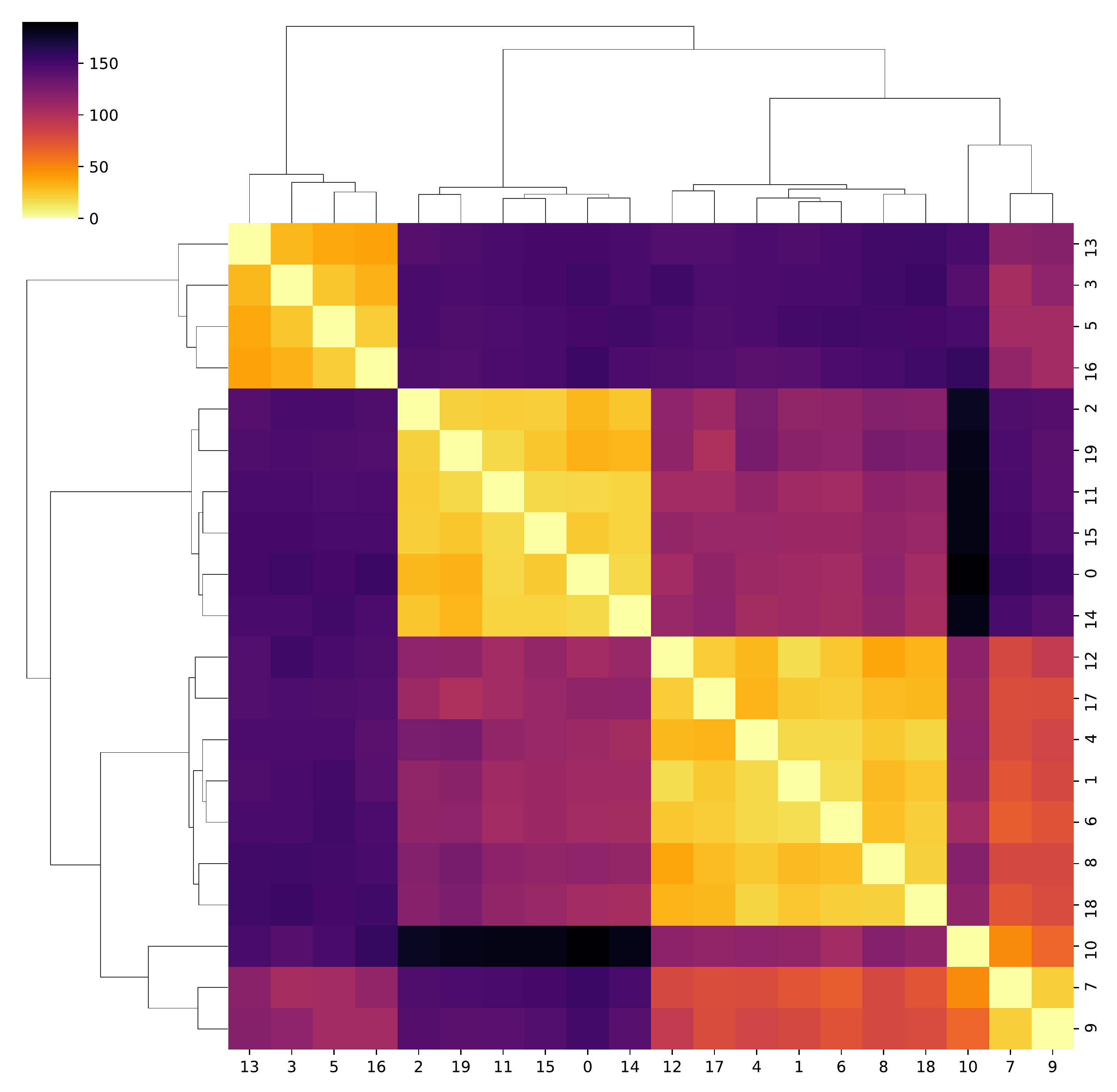}
\caption{The clustermaps with dendograms for the outlier ensemble using the path mapping distance (left), the branch mapping distance (middle) and the classic constrained edit distance (right).}
\label{fig:dm_outlier2_path}
\end{figure*}

We now present an $\mathcal{O}(n^4)$ algorithm for computing $\odedist$ on binary abstract merge trees. It strongly resembles the dynamic programming for branch mappings in~\cite{wetzels2022branch}. Although branch mappings and path mappings differ significantly from a theoretic point of view (see Section~\ref{section:discussion}), they are algorithmically closely related. For details on how to adapt the algorithm for non-binary trees, we refer to the techniques used in~\cite{DBLP:journals/algorithmica/Zhang96} and~\cite{wetzels2022branch}.

Again, the algorithm is based on identifying subtrees through pairs of nodes. Subtrees rooted in an edge as well as subtrees that are created through subtraction are identified by their root and its unique child. E.g.\ for a binary abstract merge tree $T$ with $\troot(T)=v_1$, $(v_2,v_1) \in E(T)$ and $(v_3,v_2),(v_4,v_2) \in E(T)$, we identify $T[(v_2,v_3)]$ by $v_2,v_3$ and $T - T_1[(v_2,v_3)]$ by $v_1,v_4$. The recursion in Lemma~\ref{lemma:rec_inner} can be adapted to this notation, which is illustrated in Figure~\ref{fig:recursion_dp} for four of the six recursive cases. By returning the minimum of the six results for inner nodes and adding base cases, we obtain Algorithm~\ref{alg:odedist}, which computes the here defined one-degree edit distance for abstract merge trees.

Now consider the running time of Algorithm~\ref{alg:odedist}. For two binary abstract merge trees $T_1,T_2$, there are at most $|T_1|^2 \cdot |T_2|^2$ pairs of paths or 4-tuples of vertices. Since the number of subproblems for each pair is constant, the running time has an upper bound of $|T_1|^2 \cdot |T_2|^2$ when using memoization. For trees of arbitrary degree, we would get another factor of $\mathcal{O}((\deg_1+\deg_2) \cdot \log(\deg_1+\deg_2))$, similar to the algorithms in\cite{DBLP:journals/algorithmica/Zhang96} or\cite{wetzels2022branch}.

\begin{figure*}[]
\centering
\includegraphics[width=0.25\linewidth]{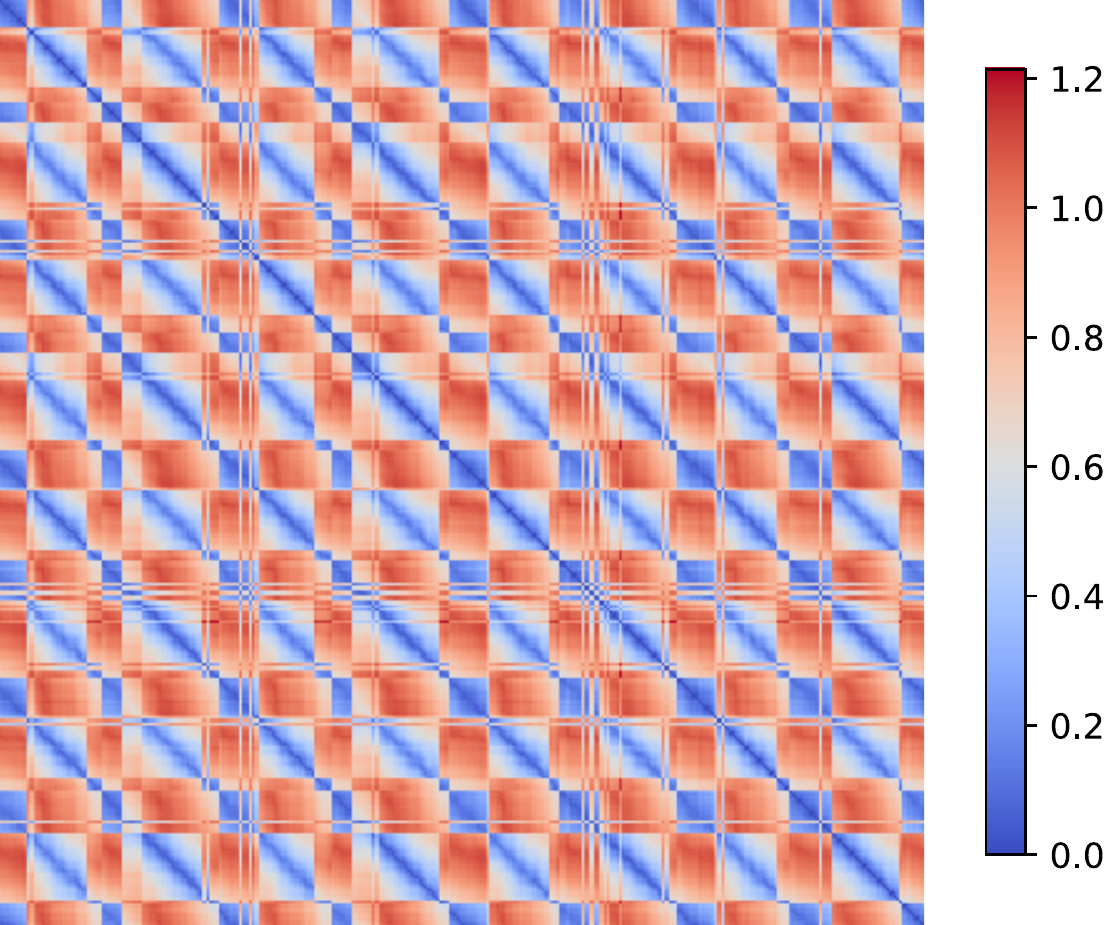}
\hspace{0.05\linewidth}
\includegraphics[width=0.25\linewidth]{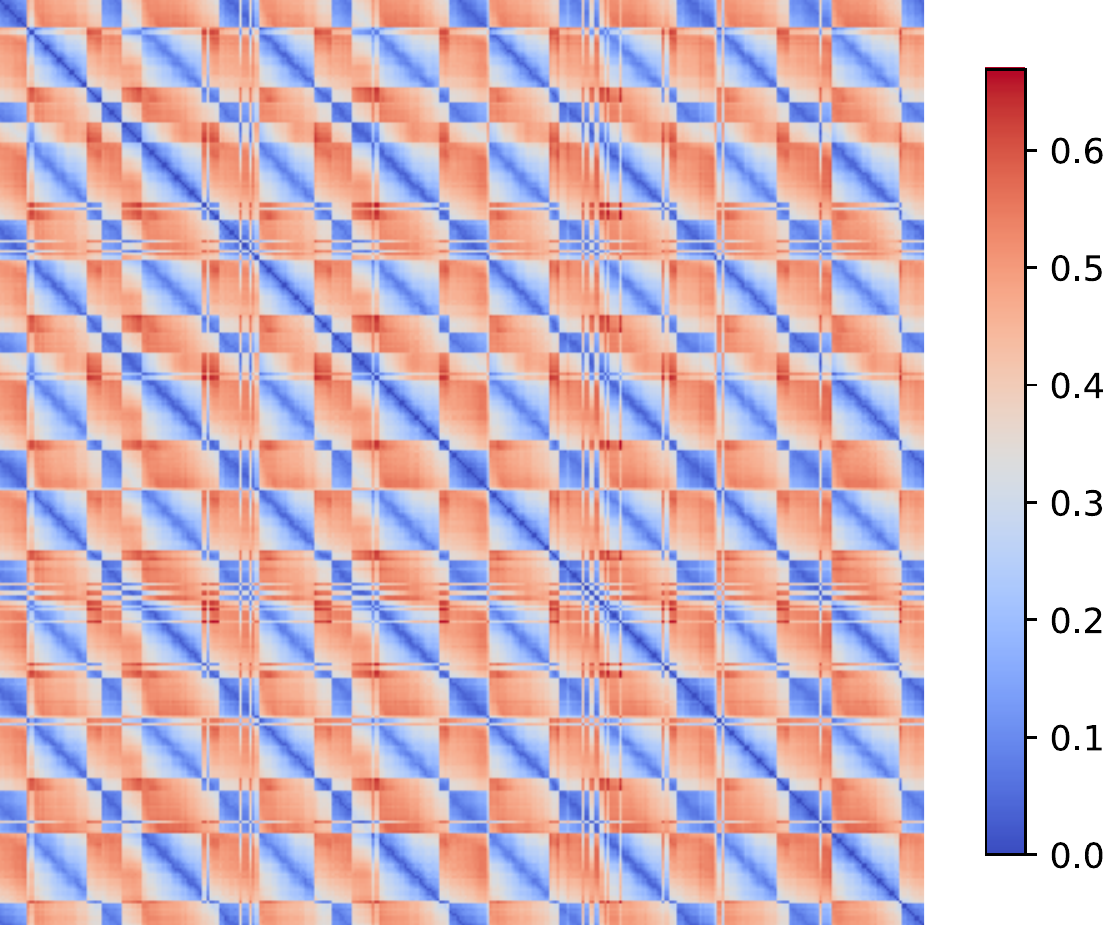}
\hspace{0.05\linewidth}
\includegraphics[width=0.25\linewidth]{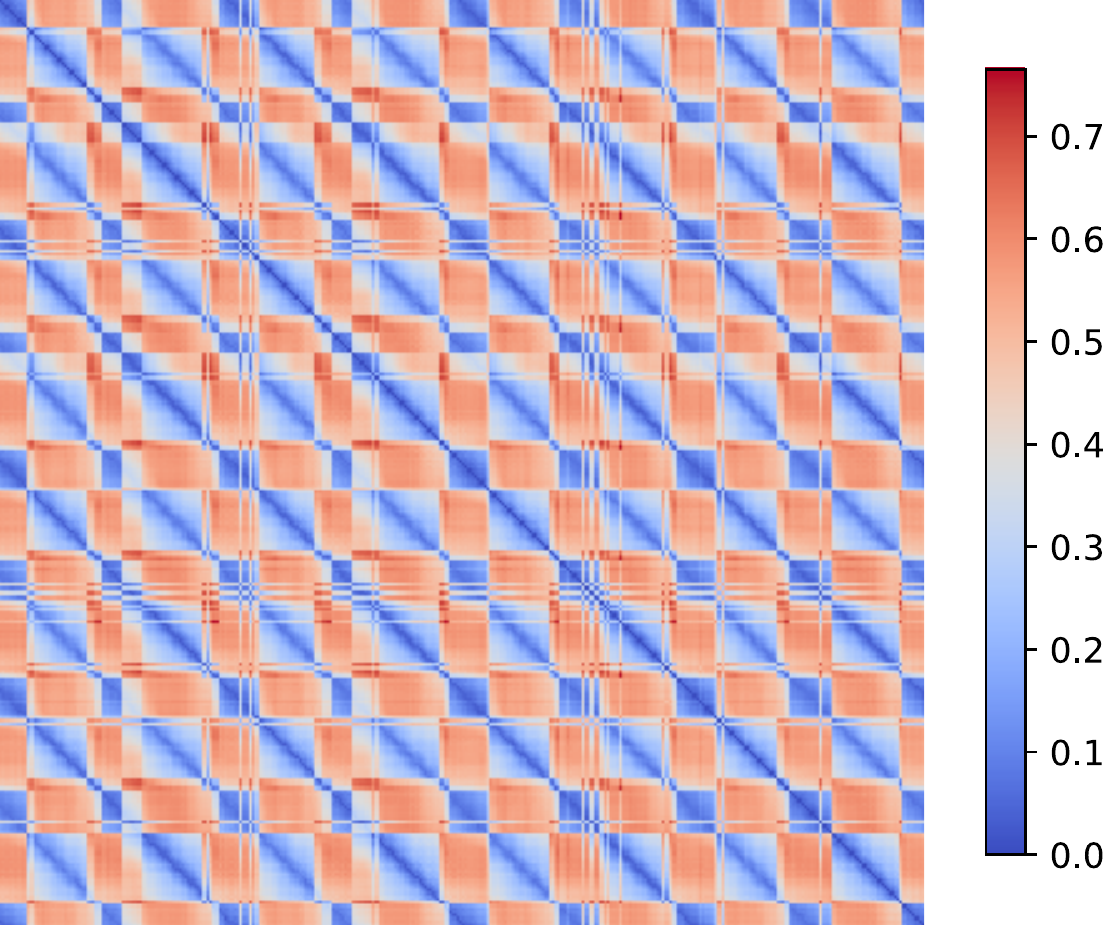}
\caption{The distance matrices of the vortex street dataset using the path mapping distance (left), the branch mapping distance (middle) and the classic constrained edit distance (right), all visualized through heatmaps.}
\label{fig:dm_weinkauf_path}
\end{figure*}

\subsection*{Experiments}

In the following, we demonstrate the utility of our technique as a basis for typical tasks in visualization. We apply the new distance to two datasets that were also used in~\cite{wetzels2022branch}. The basis for these experiments is a C++ implementation of Algorithm~\ref{alg:odedist}. The merge trees were computed using TTK~\cite{DBLP:journals/tvcg/TiernyFLGM18}. Our implementation is publicly available on Github~\cite{repository}. Computation times of single distances for the here used simplified trees were of the same order of magnitude as the closely related branch mapping distance and follow the theoretical bounds. However, the path mapping distance performed slightly better with a speed-up factor of 1.7 on average. Table~\ref{tab:runtimes} shows the comparison in more detail. This speedup is due to the simplified branching of the recursion, specifically in the case of deletions of whole subtrees, since the path mapping distance does not have to try all branch decompositions in this case and can just add up all edge persistences.

\begin{table}[b]
\centering
\scalebox{0.8}{
\begin{tabular}{l|ccccccccc}
     & HC (10) & C (18) & O (20) & VS (68) & HC (233)  \\ \hline
  $d_B$ & $17.5\cdot10^{-6}$s & $7.0\cdot10^{-5}$s & $10.4\cdot10^{-5}$s & $3.5\cdot10^{-3}$s & $6.6$s  \\
  $d_P$ & $9.6\cdot10^{-6}$s & $4.0\cdot10^{-5}$s & $6.3\cdot10^{-5}$s & $2.0\cdot10^{-3}$s & $4.0$s  \\
\end{tabular}
}
\caption{Running times of the branch mapping distance and path mapping distance on datasets from~\cite{wetzels2022branch}: The synthetic outlier ensemble (O), the cluster example (C), the heated cylinder (HC) and the vortex street (VS). The sizes of the merge trees are shown in brackets. All times were obtained on a standard workstation with an Intel Core i7-7700 and 64GB of RAM.}
\label{tab:runtimes}
\end{table}

The first dataset on which we apply our new distance is the outlier ensemble from~\cite{wetzels2022branch}. It consists of 20 scalar fields with merge trees of 20 nodes. It demonstrates the branch decomposition-independence of a distance measure if no clusters are found except a single outlier. A more detailed discussion on this behavior can be found in~\cite{wetzels2022branch}. Figure~\ref{fig:dm_outlier2_path} shows that the path mapping distance performs in the expected way and yields very similar results to the branch mapping distance. Furthermore, it can be seen that the results of the two branch decomposition-independent distances differ significantly from the results using fixed branch decompositions, since they do not show false clusters and are therefore able to identify the outlier clearly.

The second dataset is a time-varying scalar field consisting of 1001 time steps representing the velocity magnitude of the flow around a cylinder that forms a periodic Kármán vortex street. It was simulated by Weinkauf~\cite{weinkauf10c} using the \emph{GerrisFlowSolver}~\cite{gerrisflowsolver}. Figure~\ref{fig:dm_weinkauf_path} shows distance matrices of the timeline using different edit distances. The periodic pattern is clearly visible, with the same periods identified by the path mapping distance, the branch mapping distance from~\cite{wetzels2022branch} and the constrained edit distance on BDTs from~\cite{DBLP:journals/tvcg/SridharamurthyM20}.

\section{Discussion}
\label{section:discussion}

In this section, we discuss how the here introduced edit distance and path mappings compare to other edit distances, especially branch mappings, and also elaborate further on the choice of edit operations.

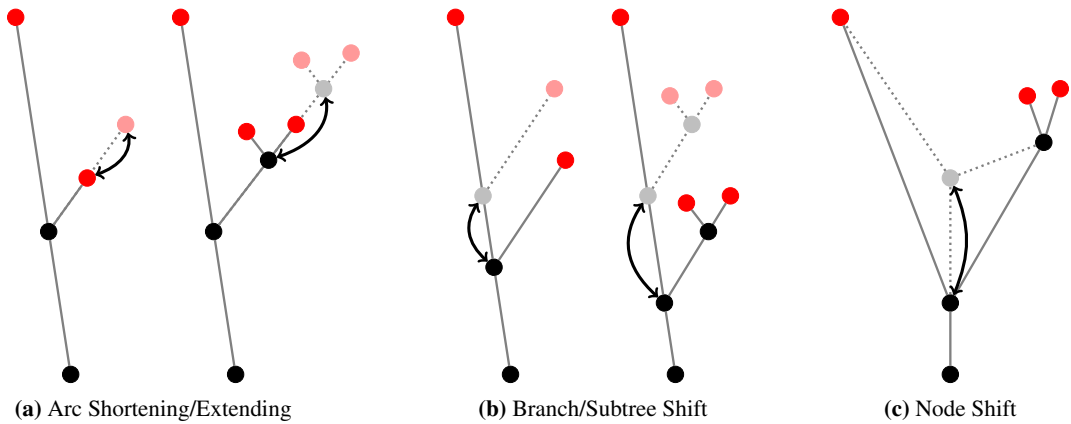
\begin{figure*}[]
 \centering 
 \resizebox{0.8\linewidth}{!}{
 \begin{tikzpicture}[yscale=0.65]
 \definecolor{lightred}{rgb}{1,0.6,0.6}
 
 \node[draw,circle,fill] at (0, 0) (root) {};
 \node[draw,circle,fill] at (-0.4, 4) (s1) {};
 \node[draw,circle,fill,red] at (-1, 10) (m1) {};
 
 \node[draw,circle,fill,lightred] at (1, 7) (m2') {};
 \draw[gray,very thick,dotted] (s1) -- (m2');
 
 \node[draw,circle,fill,red] at (0.3, 5.5) (m2) {};
 \draw[gray,very thick] (root) -- (s1);
 \draw[gray,very thick] (s1) -- (m1);
 \draw[gray,very thick] (s1) -- (m2);
 
 \draw[ultra thick,<->] (m2') to[bend left=35] (m2);
 
 
 \node[draw,circle,fill] at (3+0, 0) (root_) {};
 \node[draw,circle,fill] at (3-0.4, 4) (s1_) {};
 
 \node[draw,circle,fill,lightred] at (3+2.1, 9) (m2'_) {};
 \draw[gray,very thick,dotted] (s1_) -- (m2'_);
 \node[draw,circle,fill,lightgray] at (3+1.6, 8) (s2'_) {};
 \node[draw,circle,fill,lightred] at (3+1.2, 8.8) (m3'_) {};
 \draw[gray,very thick,dotted] (s2'_) -- (m3'_);
 
 \node[draw,circle,fill,red] at (3-1, 10) (m1_) {};
 \node[draw,circle,fill,red] at (3+1.1, 7) (m2_) {};
 \draw[gray,very thick] (root_) -- (s1_);
 \draw[gray,very thick] (s1_) -- (m1_);
 \draw[gray,very thick] (s1_) -- (m2_);
 \node[draw,circle,fill] at (3+0.6, 6) (s2_) {};
 \node[draw,circle,fill,red] at (3+0.2, 6.8) (m3_) {};
 \draw[gray,very thick] (s2_) -- (m3_);
 
 \draw[ultra thick,<->] (s2_) to[bend right=35] (s2'_);
 
 
 \node[draw,circle,fill] at (8+0, 0) (root) {};
 \node[draw,circle,fill] at (8-0.3, 3) (s1) {};
 \node[draw,circle,fill,red] at (8-1, 10) (m1) {};
 \node[draw,circle,fill,red] at (8+1, 6) (m2) {};
 \draw[gray,very thick] (root) -- (s1);
 \draw[gray,very thick] (s1) -- (m1);
 \draw[gray,very thick] (s1) -- (m2);
 
 \node[draw,circle,fill,lightgray] at (8-0.5, 5) (s1') {};
 \node[draw,circle,fill,lightred] at (8+0.8, 8) (m2') {};
 \draw[gray,very thick,dotted] (s1') -- (m2');
 
 \draw[ultra thick,<->] (s1) to[bend left=30] (s1');
 
 
 \node[draw,circle,fill] at (0+11, 0) (root_) {};
 \node[draw,circle,fill] at (-0.2+11, 2) (s1_) {};
 \node[draw,circle,fill,red] at (-1+11, 10) (m1_) {};
 \node[draw,circle,fill,red] at (1+11, 5) (m2_) {};
 \draw[gray,very thick] (root_) -- (s1_);
 \draw[gray,very thick] (s1_) -- (m1_);
 \draw[gray,very thick] (s1_) -- (m2_);
 \node[draw,circle,fill] at (0.6+11, 4) (s2_) {};
 \node[draw,circle,fill,red] at (0.2+11, 4.8) (m3_) {};
 \draw[gray,very thick] (s2_) -- (m3_);
 
 \node[draw,circle,fill,lightgray] at (-0.5+11, 5) (s1'_) {};
 \node[draw,circle,fill,lightred] at (0.7+11, 8) (m2'_) {};
 \draw[gray,very thick,dotted] (s1'_) -- (m2'_);
 \node[draw,circle,fill,lightgray] at (0.3+11, 7) (s2'_) {};
 \node[draw,circle,fill,lightred] at (-0.1+11, 7.8) (m3'_) {};
 \draw[gray,very thick,dotted] (s2'_) -- (m3'_);
 
 \draw[ultra thick,<->] (s1_) to[bend left=30] (s1'_);
 
 
 \node[draw,circle,fill] at (0+16, 0) (root__) {};
 \node[draw,circle,fill] at (0+16, 2) (s1__) {};
 \node[draw,circle,fill,red] at (-2+16, 10) (m1__) {};
 \node[draw,circle,fill,red] at (2+16, 8) (m2__) {};
 \draw[gray,very thick] (root__) -- (s1__);
 \draw[gray,very thick] (s1__) -- (m1__);
 \node[draw,circle,fill] at (1.7+16, 6.5) (s2__) {};
 \node[draw,circle,fill,red] at (1.4+16, 7.8) (m3__) {};
 \draw[gray,very thick] (s1__) -- (s2__);
 \draw[gray,very thick] (s2__) -- (m2__);
 \draw[gray,very thick] (s2__) -- (m3__);
 
 \node[draw,circle,fill,lightgray] at (0+16, 5.5) (s1'__) {};
 \draw[gray,very thick,dotted] (s1'__) -- (m1__);
 \draw[gray,very thick,dotted] (s1__) -- (s1'__);
 \draw[gray,very thick,dotted] (s1'__) -- (s2__);
 
 \draw[ultra thick,<->] (s1__) to[bend right=15] (s1'__);
 
 
 \node[] at (1.5, -1) (label1) {\Large\textbf{(a)} Arc Shortening/Extending};
 \node[] at (9.5, -1) (label1) {\Large\textbf{(b)} Branch/Subtree Shift};
 \node[] at (16, -1) (label1) {\Large\textbf{(c)} Node Shift};

 \end{tikzpicture}
 }
\caption{The three types of possible edit operations on merge trees.}
\label{fig:edit_operations_continuous}
\end{figure*}

\subsection*{Edit Operations}

We begin by listing alternative edit operations on merge trees. We identified the following operations to be considerable: stretching and shrinking of arcs or branches, shifting of branches or subtrees, and shifting of nodes. Figure~\ref{fig:edit_operations_continuous} shows examples for these three classes. For identifying the operations, we used the following core assumption: the edit operations and their costs should be rooted the scalar function. Typical tools like persistence, Wasserstein metrics or similar concepts, which are used in other distance measures, do exactly this. Note that the here considered edit operations are all variants of relabel operations. Deletions and insertions should behave roughly the same in all models.

Now we first consider node shifts. Although these are based on the scalar function, they do actually not represent the changes that we are interested in: we are usually not interested in the \emph{absolute} scalar values, but rather the \emph{relative} ones, i.e.\ for the topological similarity of two merge trees, we only want to consider the distance to the root or the length/persistence of features. Hence, node shifts are not the operations we want, at least if we use scalar difference as the cost measure. Not using these operations also does not restrict the expressiveness of the edit distance, i.e.\ all node shifts can also be expressed as a sequence of branch/subtree shifts or a sequence of stretch and shrink operations.

Next, we consider branch shifts and subtree shifts. They can be expressed through branch based edit mappings and are, in fact, closely related to branch based edit distances like the Wasserstein distance for merge trees~\cite{DBLP:journals/tvcg/PontVDT22} or the branch mapping distance~\cite{wetzels2022branch}. It is actually possible to define a base metric such that the branch mapping distance exactly represents an edit distance based on these operations. However, this edit distance differs from typical ones in various ways. 

First, we have to restrict the sequences of edit operation to those that only touch a node once. Since a node in a merge tree can belong to multiple branches, it can be modified multiple times through edit operations using different branches. This leads to the problem that these sequences would no longer correspond to the mappings, hence, we need to restrict them. Second, in contrast to classic edit distances, the branch based operations are not local ones. Typical edit distances modify only one node or edge locally, whereas branch based operations modify a complete branch and, depending on the definition, also the descending branches. One could argue that those two problems are only aesthetic ones, but they are actually the core of a third problem, which leads to disadvantages in practice: an edit distance based on branch shifts is not a metric (if the branch decomposition-independent variant is chosen) or depending on a fixed BDT (see~\cite{wetzels2022branch} for a detailed discussion on this problem). Intuitively, the reason for this is that through the use of different branch decompositions, it can actually be cheaper to go over an intermediate tree than using the optimal branch mapping between the original and resulting tree, which contradicts the triangle inequality. A formal proof can be found in~\cite{wetzels2022branch}, where they actually use the shifting base metric. To conclude this argument, for branch shifts we have to chose between the metric property and losing the correspondence to mappings (which also means efficient computability).

The remaining operations are shrinking and stretching. Due to the strong correspondence to classic edit operations on trees, we chose stretching and shrinking of \emph{edges} to be the natural model. They are local operations, lead to a metric distance, and can transform any merge tree into any other merge tree, i.e.\ they can also express all other operations. Furthermore, they naturally correspond to deformation retractions and their inverse deformations.

\subsection*{Comparison}

We now compare the here introduced edit distance and path mappings to previous methods. As mentioned in the discussion on edit operations, branch mappings capture, in essence, optimal mappings achieved through shifts of branches in a merge tree. Hence, other methods that are based on fixed BDTs (e.g.\ those from~\cite{DBLP:journals/tvcg/SridharamurthyM20,DBLP:journals/tvcg/PontVDT22,DBLP:journals/cgf/SaikiaSW14}) do the same, but only allow shifts of certain branches from a fixed decomposition. In contrast to that, path mappings capture edge based operations like stretching or shrinking. The two operation sets differ significantly from a theoretic point of view which shows in the fact that one leads to a metric while the other does not. Furthermore, the intuitions behind the operations are also completely different, as one of them is a local operation while the other one operates on global structures in a merge tree. This is an interesting observation considering the fact that algorithmically, i.e.\ in their recursive structure, path mappings and branch mappings are almost identical. To sum up, path mappings show the following behavior:
\begin{itemize}
    \item In contrast to classic edit operations, path mappings do not suffer from the problems shown in Figure~\ref{fig:classicVsContinuousEdit}, i.e.\ they are \emph{well-defined on merge trees}.
    \item In contrast to branch mappings, path mappings lead to a \emph{metric distance function}.
    \item In contrast to typical edit distances on BDTs, path mappings are \emph{branch decomposition-independent}.
    \item They are \emph{less efficient} to compute than classic edit distances (either on merge trees or BDTs).
\end{itemize}
Based on these properties, path mappings can be seen as an alternative or an improvement for branch mappings that has the same advantages over classic edit distances, but also the same increased complexity. Although the metric property did not yield improved results over the branch mapping distance in our experiments, it is an important property in practice since it allows to use the distance in more advanced analysis methods. For example, due to the metric property, path mappings could be used to compute geodesics and barycenters of merge trees using similar techniques to those from~\cite{DBLP:journals/tvcg/PontVDT22}, which we believe to not be possible or at least much harder with branch mappings. However, we leave this integration for future work. Furthermore, we should note that due to the nature of constrained edit distances, the path mapping distance is susceptible to the same saddle-swap instabilities as other merge tree edit distances in~\cite{wetzels2022branch,DBLP:journals/tvcg/SridharamurthyM20,DBLP:journals/cgf/SaikiaSW14,DBLP:journals/tvcg/PontVDT22}, but it is possible to apply the typical preprocessing to reduce this problem.

\section{Conclusion}
\label{section:conclusion}

In this paper, we defined a new edit distance for merge trees based on geometric operations on the continuous object, that resembles an intuitive adaptation of classic tree edit distances to merge trees much closer than branch based methods. We summarized its advantages and limitations in Section~\ref{section:discussion} and presented a short demonstration of its utility in practice in Section~\ref{section:experiments}. We also provide an open-source implementation publicly available on GitHub.

In future work, we want to study stability properties of the new distance (specifically comparing the unconstrained and one-degree versions in this regard), parallel algorithms for more practical running times on complex datasets, and a possible adaptation to contour trees. Furthermore, the path mapping distance could be integrated in more advanced edit distance-based visualization techniques such as barycenter merge trees~\cite{DBLP:journals/tvcg/PontVDT22} or alignments~\cite{DBLP:journals/cgf/LohfinkWLWG20}.

\acknowledgments{
The authors wish to thank Markus Anders, Heike Leitte and Jonas Lukasczyk for their valuable input. This research was funded by the Deutsche Forschungsgemeinschaft (DFG, German Research Foundation) – 442077441.}

\bibliographystyle{abbrv-doi}

\bibliography{main}
\end{document}




\maketitle

\appendix

\section*{Appendix A: Proof of Theorem 1}
\label{section:proof1}

We now provide a proof of Theorem~1. For better readability, we also state the theorem again.

\begin{theorem}
Let $\mathcal{T}_1,f_1$ and $\mathcal{T}_2,f_2$ be merge trees with abstractions $T_1=T(\mathcal{T}_1),\ell_{f_1}$ and $T_2=T(\mathcal{T}_2),\ell_{f_2}$. If $|\mathcal{T}_2|$ is homeomorphic to a deformation retract of $|\mathcal{T}_1|$, then there is a sequence of edit operations $S$ only containing deletions and relabels that decrease the edge labels such that $T_1,\ell_{f_1} \xrightarrow{\scriptscriptstyle S} T_2,\ell_{f_2}$. \textbf{(A)}

Furthermore, given two abstract merge trees $T_1,\ell_1$ and $T_2,\ell_2$ with $T_1,\ell_1 \xrightarrow{\scriptscriptstyle S} T_2,\ell_2$ and $S$ only containing deletions and relabels that decrease the edge labels, then there are merge trees $\mathcal{T}_1,f_1$ and $\mathcal{T}_2,f_2$ with $T_1=T(\mathcal{T}_1),\ell_1=\ell_{f_1}$ and $T_2=T(\mathcal{T}_2),\ell_2=\ell_{f_2}$, such that $|\mathcal{T}_2|$ is homeomorphic to a deformation retract of $|\mathcal{T}_1|$. \textbf{(B)}
\end{theorem}

\begin{proof}
\textbf{(A)} Assume that $|\mathcal{T}_2|$ is homeomorphic to a deformation retract $\mathcal{T}'_1$ of $|\mathcal{T}_1|$, i.e.\ $|\mathcal{T}'_1| \subseteq |\mathcal{T}_1|$ and there is a continuous function $r: |\mathcal{T}_1| \times [0,1] \rightarrow |\mathcal{T}_1|$ with $r(x,t) = x$ for all $x \in |\mathcal{T}'_1|,\ t \in [0,1]$ and $r(x,0) = x$ for all $x \in |\mathcal{T}_1|$. Now consider $r(|\mathcal{T}_1|,t)$ for some $t \in [0,1]$. It is a merge tree, i.e.\ homeomorphic to the underlying space of another simplicial complex $K_t$ with $\text{Vert}(K_t)$ being the critical points of $|K_t|$. It should be easy to see that w.l.o.g.\ $K_0 = \mathcal{T}_1$ and $K_1 = \mathcal{T}_2$. If $|\mathcal{T}_2|$ is homeomorphic to $|\mathcal{T}_1|$, then so are all $r(|\mathcal{T}_1|,t)$. Otherwise, there is a finite number of time points $0=t_1<t_2<...<t_k$ such that $r(|\mathcal{T}_1|,t)$ is homeomorphic to $r(|\mathcal{T}_1|,t')$ but not to $r(|\mathcal{T}_1|,t'')$ for all $t_i \leq t \leq t' < t_{i+1} \leq t''$, $0 \leq i < k$. To see why, note that each non-homeomorphic subspace of $|\mathcal{T}_1|$ has to remove at least one 1-simplex from $\mathcal{T}_1$, of which there are only finitely many. With the same argument, we can follow that at each $t_i$ a finite set of 1-simplices and their faces are removed from $K_{t_i-\epsilon}$ to obtain $K_{t_i}$. Hence, we can follow that $T(K_{t_{i-1}}) \xrightarrow{\scriptscriptstyle S} T'$ where $T'$ is isomorphic to $T(K_{t_i})$ and $S$ only consists of deletions. By relabeling the edges in $T'$ where necessary, we obtain a sequence that transforms $T(K_{t_{i-1}})$ into $T(K_{t_i})$. We can concatenate these sequences to obtain one from $T(K_0)$ to $T(K_1)$.

\textbf{(B)} Now assume that $T_1,\ell_1 \xrightarrow{\scriptscriptstyle S} T_2,\ell_2$ and $S$ only contains deletions and relabels that decrease the edge labels. By unfolding $S = o_1...o_k$ we get
$ (T_1,\ell_1) = (T'_0,\ell'_0) \xrightarrow{\scriptscriptstyle o_1} (T'_1,\ell'_1) \xrightarrow{\scriptscriptstyle o_2} ... \xrightarrow{\scriptscriptstyle o_k} (T'_k,\ell'_k) = (T_2,\ell_2) $,
where each $o_i$ is an edge contraction or a relabel. We can now construct a continuous merge tree $\mathcal{T}_i,f_i$ and a deformation retraction $r_i$ for each $T'_i$. Let $\mathcal{T}_0$ be a simplicial complex representing some embedding of $T'_0$ and $f_0$ be the function defined as follows: the vertex of $\mathcal{T}_0$ corresponding to $\troot(T'_0)$ gets function value $0$, all other vertices the scalar distance to the root defined through the edge labels and the scalars of all other points in the underlying space are defined through linear interpolation. If $o_1$ is a deletion of an edge $(c,p)$, we define $r_1$ to be the deformation retraction that contracts the 1-simplex corresponding to $(c,p)$ to the point corresponding to $p$. If $o_1$ is a relabeling of $(c,p)$, we just contract that part of the corresponding arc in $\mathcal{T}_0$ that has scalar distance to $p$ larger than $\ell'_0((c,p))$. $\mathcal{T}_1$ is then just the image of $r_0$ and $f_1$ is defined through $\ell'_1$ as before. We can do so because $\mathcal{T}_1$ is homeomorphic to a simplicial complex $K$ isomorphic to $T'_1$ (a 1-cell is removed from $\mathcal{T}_0$ if and only if an edge is removed from $T'_0$). Hence, $T(\mathcal{T}_1,f_1) = T'_1,\ell'_1$. Repeating this for each $o_i$ and concatenating the $r_i$, we get a deformation retraction from $\mathcal{T}_0$ to $\mathcal{T}_k$, which is again homeomorphic to some actual merge tree $\mathcal{T}'_k$ isomorphic to $T'_k = T_2$.
\end{proof}

\section*{Appendix B: Proof of Lemma 2}
\label{section:proof2}

Next, we prove Lemma~2. again, we repeat the lemma for completeness.

\newtheorem*{lemma2}{Lemma 2}

\begin{lemma2}

Let $M \subseteq \mathcal{P}(T_1) \times \mathcal{P}(T_2)$ be an optimal path mapping between two abstract merge trees $T_1,T_2$. For any path $p \in \mathcal{P}(T_1)$ that is contained in $M$, there are at least two paths $p',p''$ that are contained in $M$ with $\pathstart(p') = \pathstart(p'') = \pathend(p)$.

Symmetrically, the same holds for any path $p \in \mathcal{P}(T_2)$.

\end{lemma2}

\begin{proof}
Assume that $M$ contains two paths $p_1 = v_1...v_{k_1} \in \mathcal{P}(T_1)$ and $p_2 = u_1...u_{k_2} \in \mathcal{P}(T_1)$ with $u_1 = v_{k_1}$, and for all $c \neq u_2$ with $(c,v_{k_1}) \in E(T_1)$, the subtree $T_1[(c,v_{k_1})]$ is not contained in $M$. Let $q_1 = v'_1...v'_{k'_1}$ and $q_2 = u'_1...u'_{k'_2}$ be the mapped paths, i.e.\ $(p_1,q_1),(p_2,q_2) \in M$. Then we can construct a path mapping $M'$ with $\cost(M') \leq \cost(M)$ as follows.

Let $p = v_1...v_{k_1}u_2...u_{k_2}$ and $q = v'_1...v'_{k'_1}u'_2...u'_{k'_2}$, then we have $\ell_1(p) = \ell_1(p_1) + \ell_1(p_2)$ and $\ell_2(q) = \ell_2(q_1) + \ell_2(q_2)$, and thus $\cost(\ell_1(p),\ell_2(q)) \leq \cost(\ell_1(p_1),\ell_2(q_1)) + \cost(\ell_1(p_2),\ell_2(q_2))$. We define $M'$ as
$$(M \setminus \{(p_1,q_1),(p_2,q_2)\}) \cup \{(p,q)\}.$$
Since $\cost(M') = \cost(M) - (\cost(\ell_1(p_1),\ell_2(q_1)) + \cost(\ell_1(p_2),\ell_2(q_2))) + \cost(\ell_1(p),\ell_2(q))$, we know that $\cost(M') \leq \cost(M)$.

It remains to check whether $M'$ is a valid path mapping. The one-to-one condition (1) is of course not changed. For conditions (2) and (3), it is enough to check the condition from Lemma~1. Since it was fulfilled for all inner nodes of $p_1$ and $p_2$ ($M$ is a valid path mapping), we just have to check $v_{k_1}$, for which holds by assumption.

Therefore, $M'$ is an optimal path mapping as well and because of this we can assume $M$ to not contain any two such paths $p_1,p_2$, which means it fulfills the desired property.
\end{proof}

\section*{Appendix C: Proof of Lemma 3}
\label{section:proof3}

Next, we prove Lemma~3. again, we repeat the lemma for completeness.

\newtheorem*{lemma3}{Lemma 3}

\begin{lemma3}

Let $S$ be a cost-optimal sequence of edit operations that transforms an abstract merge tree $T_1$ into another one $T_2$. Then there exists a path mapping $M \subseteq \mathcal{P}(T_1) \times \mathcal{P}(T_2)$ such that $\cost(M) \leq \cost(S)$.

\end{lemma3}

\begin{proof}
Since $S$ is optimal, we can assume that it does not contain any redundant operations, i.e.\ no edge is relabeled twice or relabeled and inserted/deleted, and an edge contraction and its exact inverse do not appear in $S$ together.

Furthermore, we can reorder $S$ in the following way without changing the resulting tree and the cost:
\begin{itemize}
\item Relabel operations can be placed in arbitrary positions, since they do not interfere with any other operation. If we swap them with an insert or delete operation, the edge to relabel might change, but the result stays the same.
\item Insert and delete operations can also be swapped freely, since they only appear on edges to leaves. Note that only pairs of one insert and one delete operation can be swapped, as there can be pairs of two insertions or pairs of two deletions that depend on each other, e.g.\ an insertion of a new leaf node and edge followed by an insertion of another leaf and corresponding saddle that divides the before inserted edge.
\end{itemize}
Hence, we can assume $S$ to have the following structure: first, a sequence $S_1$ of deletions, then a sequence $S_2$ of relabels, followed by a sequence $S_3$ of insertions. Now we need to construct the path mapping $M$ from these three sequences. Let $T'$ be the tree obtained from $T_1$ by $S_1$ and $T''$ be the one obtained from $T'$ by $S_2$.

For $S_1$, we can construct a path mapping $M_1$ between $T_1$ and $T'$ with $\cost(M_1) \leq \cost(S_1)$ as follows: $M_1 = \{ (\tpath_{T_1}(c,p),(c,p)) \mid (c,p) \in E(T') \}$. To see the equivalence, consider how $S_1$ changes $T_1$. Each operation only removes one edge and, in some cases, merges two other edges into one. Thus, for each operation, we get a surjective mapping from the paths of the first tree to the edges of the second tree, which can be concatenated to obtain $M_1$. We omit a formal proof by induction.

Since $S_2$ does not change the structure of $T'$, we get a bijective path mapping $M_2$ between the edges of $T'$ and the edges $T''$ with $\cost(M_2) \leq \cost(S_2)$. Furthermore, since it is a bijection between the edges of the two trees, we can conclude that $M_2 \circ M_1$ is a path mapping between $T_1$ and $T''$ and $\cost(M_2 \circ M_1) \leq \cost(S_1S_2)$, since euclidean distance is a metric on $\mathbb{R}$.

For $S_3$, we can reverse the process for $S_1$, i.e.\ we can construct a path mapping $M_3$ between $T''$ and $T_2$ with $\cost(M_3) \leq \cost(S_3)$ through $M_3 = \{ ((c,p),\tpath_{T_2}(c,p)) \mid (c,p) \in E(T'') \}$. Again, we omit a formal proof by induction.

We can then concatenate $M_1,M_2,M_3$ like $S_1,S_2,S_3$ to obtain $M=M_3 \circ M_2 \circ M_1$ with $\cost(M) \leq \cost(S)$. To see why, consider the image of $M_2 \circ M_1$ and the pre-image of $M_3$. Both are exactly $E(T'')$, hence, we can concatenate the two mappings and for each pair, the distance can only get less, since euclidean distance is a metric on $\mathbb{R}$.
\end{proof}

\section*{Appendix D: Proof of Lemma 4}
\label{section:proof4}

Next, we prove Lemma~4. again, we repeat the lemma for completeness.

\newtheorem*{lemma4}{Lemma 4}

\begin{lemma4}

For two abstract merge trees $T_1,T_2$, let $M \subseteq \mathcal{P}(T_1) \times \mathcal{P}(T_2)$ be a path mapping. Then there exists a sequence $S$ of edit operations that transforms $T_1$ into $T_2$, with $\cost(S) = \cost(M)$.

\end{lemma4}

\begin{proof}
Let $T_1^M$ be the tree that consists of the union of all paths from $T_1$ that are contained in $M$. By Lemma~2, we can conclude that $T_1^M$ is an abstract merge tree. We define $T_2^M$ symmetrically. Note that $T_1^M$ and $T_2^M$ are structurally isomorphic, i.e.\ when ignoring the labels.

Furthermore, let $D_1 \subseteq E(T_1)$ be the set of edges in $T_1$ that are not contained in $M$, and $D_2 \subseteq E(T_2)$ those not contained in $M$ from $T_2$. From condition~3 in Definition~2 we can conclude that $D_1$ and $D_2$ can be partitioned into $D_1^1,...,D_1^{k_1}$ and $D_2^1,...,D_2^{k_2}$ such that each $D_i^j$ forms a complete subtree of $T_1$/$T_2$, since any path contained in $M$ has to be connected to the root through other contained paths.

For each of the $D_1^j$, we therefore can construct a sequence $S_1^j$ of edge deletions that removes the complete corresponding subtree. The cost $\cost(S_1^j)$ of this sequence is the sum of the labels of all edges in $D_1^j$. Note that deleting a merged edge has exactly the same cost as deleting the merged individually, since the merged label is defined through the sum. With this observation, we get that
$$ \sum_{(l_1,l_2) \in \mappingdelete(M)}\cost(l_1,l_2) = \sum_{1 \leq j \leq k_1} \cost(S_1^j). $$
Furthermore, it should be easy to see that $S_1^1...S_1{k_1}$ transforms $T_1$ into $T_1^M$.
Symmetrically, we can construct a sequence of insertions $\cost(S_2^j)$ for each $D_2^j$, such that $S_2^1...S_2{k_2}$ transforms $T_2^M$ into $T_2$ and
$$ \sum_{(l_1,l_2) \in \mappinginsert(M)}\cost(l_1,l_2) = \sum_{1 \leq j \leq k_2} \cost(S_2^j). $$

As a last step, we now define a sequence $S_M$ of relabel operations that transforms $T_1^M$ into $T_2^M$ such that
$$ \sum_{(l_1,l_2) \in \mappingrelabel(M)}\cost(l_1,l_2) = \cost(S_M). $$
Then, we can conclude that $S = S_1^1...S_1{k_1}S_MS_2^1...S_2{k_2}$ transforms $T_1$ into $T_2$ and has exactly the same costs as $M$.

To construct $S_M$, note that for each pair $(p_1,p_2) \in M$, there are edges $e_1$ in $T_1^M$ with $\ell_1^M(e_1) = \ell_1(p_1)$ and $e_2$ in $T_2^M$ with $\ell_2^M(e_2) = \ell_2(p_2)$, and vice versa. Hence, we can construct $S_M$ by taking the corresponding edge relabels for each pair of paths in $M$.
\end{proof}

\bibliographystyle{abbrv-doi}
